\renewcommand{\algorithmicendif}{\textbf{fi}}
\newcommand{\OLIF}[2]
{\STATE \algorithmicif \mbox{ #1 }\algorithmicthen\mbox{ #2 }\algorithmicendif}
\newcommand{\OLELIF}[3]
{\STATE \algorithmicif \mbox{ #1 }\algorithmicthen\mbox{ #2 }\
  \algorithmicelse \mbox{ #3} \algorithmicendif}
\newcommand{\rst}[1]{_{\upharpoonright{#1}}}
\newcommand{\is}{\leftarrow}
\begin{document}

\title{Multi-Stage Improved Route Planning Approach}
\subtitle{Theoretical Foundations}%
\author{Petr Hlin\v{e}n\'y ~\and~ Ondrej Mori\v{s}}

\institute{Faculty of Informatics, Masaryk University \\
  Botanick\'a 68a, 602 00 Brno, Czech Republic  \\
  \email{hlineny@fi.muni.cz, xmoris@fi.muni.cz}}

\maketitle

\begin{abstract}
A new approach to the static route planning problem, based on a multi-staging 
concept and a \emph{scope} notion, is presented. The main goal (besides implied
efficiency of planning) of our approach is to address---with a solid theoretical
foundation---the following two practically motivated aspects: a \emph{route 
comfort} and a very \emph{limited storage} space of a small navigation device,
which both do not seem to be among the chief objectives of many other studies.
We show how our novel idea can tackle both these seemingly unrelated aspects at
once, and may also contribute to other established route planning approaches 
with which ours can be naturally combined. We provide a theoretical proof that 
our approach efficiently computes exact optimal routes within this concept, 
as well as we demonstrate with experimental results on publicly available road 
networks of the US the good practical performance of the solution.
\end{abstract}

\section{Introduction}
\label{sec:introduction}

The single pair shortest path SPSP problem in world road networks, also 
known as the {\em route planning problem}, has received considerable attention 
in the past decades. However classical algorithms like Dijkstra's or A* are 
in fact ''optimal`` in a theore\-tical sense, they are not well suitable for 
huge graphs representing real-world road networks having up to tens millions 
of edges. In such situation even an algorithm with a linear time complexity 
cannot be feasibly run on a mobile device lacking computational power and 
working memory. 

What can be done better? We focus on the \emph{static route planning problem} 
where the road network is static in the sense that the underlying graph and 
its attributes do not change in time. Thus, a feasible solution lies in 
suitable \emph{prepro\-cessing} of the road network in order to improve both 
time and space complexity of subsequent \emph{queries} (to find an optimal 
route from one position to~another). However, to what extent such a 
preprocessing is {\em limited in the size} of the precomputed auxiliary data?
It is not hard to see that there is always some trade-off between this
storage space requirement and the efficiency of queries---obviously, one 
cannot precompute and store all the optimal routes in advance. See also a 
closer discussion below.

\paragraph{\bf Related Work. }

Classical techniques of the static route planning are represented by 
Dijkstra's algorithm \cite{Dijkstra1959}, A* algorithm \cite{Hart1968}
and their bidirectional variations \cite{Pohl1969}. In the last decade,
two sorts of more advanced techniques have emerged and become popular. 
The first one prunes the search of Dijkstra's or A* algorithms using 
preprocessed information This includes, in particular, reach-based 
\cite{Gutman2004,Goldberg2005A}, landmarks \cite{Goldberg2005B,Goldberg2005C},
combinations of those \cite{Goldberg2007}, and recent hub-based labeling 
\cite{Abraham2011}. 

The second sort of techniques (where our approach conceptually fits, too) 
exploits a road network structure with levels of hierarchy to which a route 
can be decomposed into. For instance, highway and contraction hierarchies 
\cite{Sanders2006,Schultes2007,Geiberger2008}, transit nodes \cite{Bast2007},
PCD \cite{Maue2009} and SHARC routing \cite{Bauer2010,Brunel2010} represent 
this sort. Still, there are also many other techniques and combinations,
but---due to lack of space---we just refer to Cherkassky et al. 
\cite{Cherkassky1994}, Delling et al. \cite{Delling2009A,Delling2009B}, 
and Schultes \cite{Schultes2008}. Finally, we would like to mention the 
interesting notion of highway dimension \cite{Abraham2010A}, and the ideas 
of customizable \cite{Abraham2011} and mobile \cite{Sanders2008} route 
planning. 

\paragraph{\bf Our Contribution. }

We summarize the essence of all our contribution already here, while we 
implicitly refer to the subsequent sections for precise definitions, 
algorithms, proofs, and further details.

First of all, we mention yet another integral point of practical route
planning implementations---human-mind intuitiveness and comfortability 
of the computed route. This is a rather subjective requirement which is not 
easy to formalize via mathematical language, and hence perhaps not often 
studied together with the simple precise ``shortest/fastest path'' 
utility function in the papers.

\begin{itemize}
\item[] \emph{Intuitiveness and comfort of a route}: Likely everyone using car 
  navigation systems has already experienced a situation in which the computed 
  route contained unsuitable segments, e.g.\ tricky shortcuts via low-category 
  roads in an unfamiliar area. Though such a shortcut might save a few seconds 
  on the route in theory, regular drivers would certainly not appreciate it and
  the real practical saving would be questionable. This should better be 
  avoided.
\end{itemize}

Nowadays, the full (usually commercially) available road network data contain 
plenty of additional metadata which allow it to detect such unreasonable routes.
Hence many practical routing implementations likely contain some kinds of
rather {\em heuristic penalization} schemes dealing with this comfortability issue.
We offer here a mathematically sound and precise formal solution to the
route comfort issue which builds on a new theoretical concept of \emph{scope} 
(Sec.~\ref{sec:scope}).

The core idea of a scope and of scope-admissible routes can be informally
outlined as follows: The elements (edges) of a road network are spread into 
several {\em scope levels}, each associated with a {\em scope value}, such that
an edge $e$ assigned scope value $s_e$ is {\em admissible} on a route $R$ if, 
before or after reaching~$e$, such $R$ travels distance less than $s_e$ on 
edges of higher scope level than~$e$. Intuitively, the scope levels and values 
describe suitability and/or importance of particular edges for long-range 
routing. This is in some sense similar to the better known concept of 
{\em reach} \cite{Gutman2004}; but in our case the importance of an edge is to 
be decided from available network metadata and hence its comfort and intuitive
suitability is reflected, making a fundamental difference from the reach 
concept.

The effect seen on {\em scope admissible} routes (Def.~\ref{def:scope},
\ref{def:xadmissible},\ref{def:stadmissible}) is that low-level roads are 
fine ``near'' the start or target positions (roughly until higher-level roads
become available), while only roads of the highest scope levels are admissible 
in the long middle sections of distant routing queries. This nicely corresponds
with human thinking of intuitive routes, where the driver is presumably 
familiar with neighborhoods of the start and the target of his route (or, such 
a place is locally accessible only via low-level roads anyway). On contrary, 
on a long drive the mentally demanding navigation through unknown rural roads
or complicated city streets usually brings no overall benefit, even if it were 
faster in theory. 

To achieve good practical usability, too, road network segments are assigned
scope levels (cf.~Table~\ref{tab:scope_example}) according to expectantly
available metadata of the network (such as road categories, but also road 
quality and other, e.g.\ statistical information). It is important that
already experiments with publicly available TIGER/Line 2009 \cite{Tiger2009} 
US road data, which have metadata of questionable quality, show that the 
restriction of admissible routes via scope has only a marginal statistical 
effect on shortest distances in the network.

\smallskip
Furthermore, a welcome added benefit of our categorization of roads into scope 
levels and subsequent scope admissibility restriction is the following.

\begin{itemize}
\item[] \emph{Storage space efficiency}:
  We suggest that simply allowing to store ``linearly sized'' precomputed 
  auxiliary data (which is the case of many studies) may be too much in 
  practice. Imagine that setting just a few attributes of a utility function 
  measuring route optimality results in an exponential number of possibilities 
  to which the device has to keep separate sets of preprocessed auxiliary 
  data. In such a case a stricter storage limits should be imposed.
\end{itemize}

In our approach, preprocessed data for routing queries have to deal only with 
the elements of the highest scope(s). This allows us to greatly reduce the 
amount of auxiliary precomputed information needed to answer queries quickly. 
We use (Sec.~\ref{sub:preprocessing}) a suitably adjusted fast vertex-separator 
approach which stores only those precomputed boundary-pair distances (in the 
cells) that are admissible on the highest scope level. This way we can shrink 
the {\em auxiliary data size to less than $1\%$} of the road network size 
(Table~\ref{tab:preprocessing}) which is a huge improvement. 
Recall that {\em vertex-separator} preprocessing produces a 
partition of the network graph into moderate-size cells (several thousands 
of edges, say) such that only (selected) distances between pairs of their 
boundary vertices are precomputed.

Not to forget, our subsequent routing query algorithm (Sec.~\ref{sec:query})
then answers quickly and exactly (not heuristically) an optimal route
among all scope admissible ones between the given positions. 
The way we cope with scope admissibility (among other aspects) in a 
{\em route planning query}, using the precomputed auxiliary data,
It is briefly summarized as follows. 

\begin{itemize}
\item[]
\emph{Multi-staging approach}:\,
The computation of an optimal route is split into~two
(or possibly more -- with finer network metadata and hierarchical separators) 
different stages. In the local -- {\em cellular}, stage, a modification of 
plain Dijkstra's algorithm is used to reach the cell boundaries in such a 
way that lower levels are no longer admissible. Then in the global -- 
{\em boundary}, stage, an optimal connection between the previously 
reached boundary vertices found on the (much smaller) boundary graph 
given by auxiliary data.
\end{itemize}

Notice that the cellular stage may possibly cross the boundaries of a few 
adjacent cells if the start or target is near to them, but practical 
experiments show that such a case is quite rare. After all, the domain 
of a cellular stage is a small local neighbourhood, and Dijkstra's search 
can thus be {very fast} on it with additional help of a reach-like 
paramet\-rization (Def.~\ref{def:Sreach}). Then, handling 
the precomputed boundary graph in the global stage is very flexible---%
since no side restrictions exist there---and can be combined with virtually 
any other established route planning algorithm (see Sec.~\ref{sec:query}).
The important advantage is that the boundary graph is now much smaller
(recall, $<1\%$ in experiments) than the original network size, and hence 
computing on it is not only faster, but also more {working-memory 
efficient} which counts as well for mobile navigation devices.

\paragraph{\bf Paper organization. }
After the informal outline of new contributions, this paper continues with 
the relevant formal definitions---Section~\ref{sec:preliminaries} for route 
planning basics, and Section~\ref{sec:scope} for thorough description of the 
new scope admissibility concept. An adaptation of Dijkstra for scope is 
sketched in Sec.~\ref{sub:Sdijkstra}. Then Section~\ref{sec:algorithm} shows 
further details of the road network preprocessing (\ref{sub:preprocessing}) 
and query (\ref{sec:query}) algorithms. A summary of their experimental 
results can be found in Tables \ref{tab:preprocessing} and~\ref{tab:queries}.

\section{Preliminaries}
\label{sec:preliminaries}

A {\em directed graph} $G$ is a pair of a finite set $V(G)$ of vertices and 
a finite multiset $E(G) \subseteq V(G) \times V(G)$ of edges. The \emph{reverse
graph} $G^R$ of $G$ is a graph on the same set of vertices with all of the 
edges reversed. A \emph{subgraph} $H$ of a graph $G$ is denoted by $H \subseteq
G$. A subgraph $H \subseteq G$ is \emph{induced by a set of edges} $F \subseteq
E(G)$ if $E(H) = F$ and $V(H) = \{ u \in V(G) \, | \, \exists f\in F
\mbox{ incident with } u\}$; we then write~$H=G[F]$. 

A \emph{walk} $P \subseteq G$ is an alternating sequence $(u_0,e_1,u_1,\ldots,
e_k,u_k)$ of vertices and edges of $G$ such that $e_i = (u_{i-1},u_i)$ for 
$i = 1,\ldots,k$. A {\em subwalk} is a subsequence of a walk. 
A \emph{concatenation} $P_1 .\,P_2$ of walks $P_1=(u_0,e_1,
u_1,\ldots,e_k,u_k)$ and $P_2=(u_k,e_{k+1},u_{k+1},\ldots,e_l,u_l)$ is the walk 
$(u_0,e_1,u_1,\ldots,e_k,u_k,e_{k+1},\ldots,e_l,u_l)$. If $P_2$ is a single edge 
$f$, then we write~$P_1.\,f$. 

A walk $Q$ is is a {\em prefix} of another walk $P$ if $Q$ is a subwalk of
$P$ starting with the same index, and analogically with {\em suffix}.
The \emph{prefix set} of a walk $P=(u_0,e_1,\ldots,e_k,u_k)$ is $\mathit{Prefix}
(P)\stackrel{\textrm{\tiny{def}}}{=} \{(u_0,e_1,\ldots,e_i,u_i) |\> 0 \le i \le 
k\},$ and analogically $\mathit{Suf\!fix}(P) \stackrel{\textrm{\tiny{def}}}{=} 
\{(u_i,e_{i+1},\ldots,e_k,u_k) |\>  0 \le i\le k\}$. Two walks are {\em 
overhanging} (one another) if either one is a subwalk of the other, or a 
non-zero-length suffix of one is a prefix of the other (informally, one can 
traverse both with one superwalk).  

The \emph{weight} of a walk $P \subseteq G$ wrt. a weighting $w: E(G) 
\mapsto \mathbb{R}$ of $G$ is defined as $|P|_w=w(e_1)+w(e_2)+\dots+w(e_k)$
where $P=(u_0,e_1, \ldots,e_k,u_k)$. The \emph{distance} $\delta_w(u,v)$ from 
$u$ to $v$ in $G$ is the minimum weight over all walks from $u$ to $v$, or 
$\infty$ if there is no such walk.

A road network can be naturally represented by a graph $G$ such that the 
junctions are represented by $V(G)$ and the roads (or road segments) by 
$E(G)$. Chosen cost function (e.g. travel time, distance, expenses, etc.) 
is represented by a {\em non-negative} edge weighting $w: E(G) \mapsto
\mathbb{R}^+_0$ of $G$. 

\begin{definition}[Road Network]
\label{def:road_network}
Let $G$ be a graph with a non-negative edge weighting $w$. A \emph{road 
network} is the pair $(G,w)$.
\end{definition}

A brief overview of classical Dijkstra's and A* algorithms and their 
bidirectional variants for shortest paths follows, but we also would like
to recall the useful notion of a reach given by Gutman \cite{Gutman2004}. 

\begin{definition}[Reach \cite{Gutman2004}]
\label{def:reach}
Consider a walk $P$ in a road network $(G,w)$ from $s$ to~$t$ where 
$s,t\in V(G)$. The {\em reach} of a vertex $v\in V(P)$ on $P$ is $r_P(v)=\min 
\{ |P^{sv}|_w,|P^{vt}|_w \}$ where  $P^{sv}$ and $P^{vt}$ is a subwalk of $P$ from 
$s$ to $v$ and from $v$ to $t$, respectively. The \emph{reach of $v$ in $G$}, 
$r(v)$, is the maximum value of $r_Q(v)$ over all optimal walks $Q$ between 
pairs of vertices in $G$ such that $v\in V(Q)$.
\end{definition}

\subsubsection{Classical Shortest Paths Algorithms.} 
\label{sec:shortest-paths-algorithms}

Classical Dijkstra's algorithm solves the single source shortest paths
problem\footnote{Given a graph and a start vertex find the shortest paths 
from it to the other vertices.} in a graph $G$ with a non-negative 
weighting $w$. Let $s\in V(G)$ be the start vertex (and, optionally, let 
$t \in V(G)$ be the target vertex).

\begin{itemize}
\item The algorithm maintains, for all $v \in V(G)$, a {\em (temporary) 
  distance estimate} of the shortest path from $s$ to $v$ found so far in 
  $d[v]$, and a predecessor of $v$ on that path in $\pi[v]$. 

\item The scanned vertices, i.e. those with $d[v] = \delta_w(s,v)$ confirmed, 
  are stored in the set $T$; and the discovered but not yet scanned vertices, 
  i.e. those with $\infty >d[v] \geq \delta_w(s,v)$, are stored in the set $Q$. 

\item The algorithm work as follows: it iteratively picks a vertex $u \in Q$ 
  with minimum value $d[u]$ and relaxes all the edges $(u,v)$ leaving $u$.
  Then $u$ is removed from $Q$ and added to $T$. {\em Relaxing} an edge $(u,v)$
  means to check if a shortest path estimate from $s$ to $v$ may be improved 
  via $u$; if so, then $d[v]$ and $\pi[v]$ are updated. Finally, $v$ is added 
  into $Q$ if is not there already. 
\item The algorithm terminates when $Q$ is empty (or if $t$ is scanned).
\end{itemize}

Time complexity depends on the implementation of $Q$; such as it is 
${\cal{O}}(|E(G)| + |V(G)|\log|V(G)|)$ with the Fibonacci heap.

\smallskip

A* algorithm is also known as \emph{goal directed search} and it is equivalent 
to aforementioned Dijkstra's one on a graph modified as follows:

\begin{itemize}
\item A \emph{potential function} $p: V(G) \mapsto \mathbb{R}$ is defined and 
  used to reduce edge weights such that $w_p(u,v) = w(u,v) - p(u) + p(v)$.
  If $w_p$ is non-negative, the algorithm is correct and $p$ is called 
  \emph{feasible}.
\item Each path from $s$ to $t$ found by Dijkstra's algorithm in $G$ with 
  the reduced weighting $w_p$ then differs from reality by $p(t)-p(s)$ and this
  value must be subtracted at the end.
\end{itemize}

Dijkstra's and A* algorithms can be used ``bidirectionally'' to solve SPSP 
problem. Informally, one (forward) algorithm is executed from the start 
vertex in the original graph and another (reverse) algorithm is executed from 
the target in the reversed graph. Forward and reverse algorithms can 
alternate in any way and algorithm terminates, for instance, when there is 
a vertex scanned in both directions.

Dijkstra's and $A^*$ algorithms can be accelerated by reach as follows: 
when discovering a vertex $v$ from $u$, the algorithm first tests whether 
$r(v) \geq d[u] + w(u,v)$ (the current distance estimate from the start)
or $r(v) \geq {lower}(v)$ (an~auxiliary lower bound on the distance from $v$
to target), and only in case of success it inserts $v$ into the queue of 
vertices for processing.

\section{The New Concept -- Scope}
\label{sec:scope}

The main purpose of this section is to provide a theoretical foundation for 
the aforementioned vague objective of ``comfort of a route''. Recall that 
the scope levels referred in Definition~\ref{def:scope} are generally 
assigned according to auxiliary metadata of the road network, e.g.\ the 
road categories and additional available information which is presumably 
included with it; see Table~\ref{tab:scope_example}. Such a scope level 
assignment procedure is {\em not} the subject of the theoretical foundation.

\begin{definition}[Scope]
\label{def:scope}
Let $(G,w)$ be a road network. \emph{A scope mapping} is defined as ${\cal{S}}: 
E(G) \mapsto \mathbb{N}_0 \cup \{\infty\}$ such that $0,\infty \in 
Im({\cal{S}})$. Elements of the image $Im({\cal{S}})$ are called \emph{scope 
levels}. Each scope level $i\in Im({\cal{S}})$ is assigned a constant value of 
\emph{scope} $\nu^{\cal{S}}_i \in \mathbb{R}_0 \cup \{\infty\}$ such that 
$0 = \nu^{\cal{S}}_0 < \nu^{\cal{S}}_1 < \cdots < \nu^{\cal{S}}_\infty = \infty$.
\end{definition}

\begin{table}[t]
  \caption{A very simple demonstration of a scope mapping which is based 
    just on the road categories. Highways and other important roads have 
    unbounded ($\infty$) scope, while local, private or restricted roads have 
    smaller scope. The zero scope is reserved for roads that physically cannot 
    be driven through (including, for instance, long-term road obstructions). 
    The weight function of this example is the travel distance in meters.}
  \label{tab:scope_example}
  \centering
  \renewcommand{\arraystretch}{1.1}
  \begin{tabular*}{\textwidth}{c@{\mbox{\quad}}c@{\mbox{\quad}}c@{\
        \mbox{\quad}}c@{\mbox{\quad}}c}
    \hline \hline 
    \textit{Scope}& \textit{level} $i$ & \textit{Value $\nu^{\cal{S}}_i$} & 
    \textit{Handicap $\kappa^{\cal{S}}_i$} & \textit{Road category} \\
    \hline 
  &  0 & 0  & 1 & Alley, Walkway, Bike Path, Bridle Path \\
  &  1 & 250 & 50 & Parking Lot Road, Restricted Road \\
  &  2 & 2000 & 250 & Local Neighborhood Road, Urban Roads \\
  &  3 & 5000 & 600 & Rural Area Roads, Side Roads \\
  &  $\infty$ & $\infty$ & ($\infty$) & Highway, Primary (Secondary) Road \\
    \hline
  \end{tabular*}
\end{table}

To give readers a better feeling of how the scope level assignment outlined 
in Table~\ref{tab:scope_example} looks like, we present some statistics of 
the numbers of edges assigned to each level in 
Table~\ref{tab:scope_distribution}.

\begin{table}[hb]
  \caption{An example of the scope levels distribution in several road 
    networks graphs of the US; the initial scope is assigned in accordance with 
    road categories as described in Table~\ref{tab:scope_example} and then 
    balanced algorithmically.}
  \label{tab:scope_distribution}
  \renewcommand\arraystretch{1.1}
  \centering
  \begin{tabular*}{\textwidth}{@{\mbox{\qquad}}c@{\mbox{ }}@{\mbox{ }}r@{\mbox{\
        }}@{\mbox{ }}r@{\mbox{ }}@{\mbox{ }}r@{\mbox{ }}@{\mbox{ }}r@{\mbox{\
        }}@{\mbox{ }}r@{\mbox{ }}@{\mbox{ }}r@{\mbox{ }}}
    \hline \hline
    \textit{Scope level} & \textit{USA-all} & \textit{Alabama} & 
    \textit{Connecticut} & \textit{Florida} & \textit{Georgia} & 
    \textit{Indiana} \\
    \hline 
    $\infty$ & 8.171\% & 9.726\% & 12.874\% & 8.392\% & 11.331\% & 8.672\% \\
    $3$ & 62.420\% & 73.711\% & 64.022\% & 58.389\% & 69.564\% & 64.967\% \\
    $2$ & 19.712\% & 10.052\% & 17.018\% & 28.687\% & 13.250\% & 17.270\% \\
    $1$ & 6.570\% & 5.933\% & 3.978\% & 2.038\% & 4.946\% & 7.239\% \\
    $0$ & 0.660\% & 0.018\% & 0.092\% & 0.083\% & 0.020\% & 0.007\%\\
    \hline
  \end{tabular*}
\end{table}

There is one more formal ingredient missing to make the scope concept a 
perfect fit: imagine that one prefers to drive a major highway, then she 
should better not miss available slip-roads to it. This is expressed with 
a~``handicap'' assigned to the situations in which a turn to a next road of 
higher scope level is possible, as follows:

\begin{definition}[Turn-Scope Handicap]
  \label{def:turnhandicap}
  Let ${\cal S}$ be a scope mapping in $(G,w)$. The \emph{turn-scope handicap} 
  $h_{\cal S}(e)\in\mathbb{N}_0 \cup \{\infty\}$ is defined, for every 
  $e=(u,v)\in E(G)$, as the maximum among ${\cal S}(e)$ and all ${\cal S}(f)$ 
  over $f=(v,w)\in E(G)$. Each handicap level $i$ is assigned a constant 
  $\kappa^{\cal{S}}_i$ such that $0<\kappa^{\cal{S}}_0 < \cdots <\kappa^{\cal{S}}_\infty$.
\end{definition}

The desired effect of admitting low-level roads only ``near'' the start
or target positions---until higher level roads become widely available---is
formalized in next Def.~\ref{def:xadmissible},~\ref{def:stadmissible}.
We remark beforehand that the seemingly complicated formulation is actually
the right simple one for a smooth integration into Dijkstra.

\begin{definition}[Scope Admissibility]
  \label{def:xadmissible}
  Let $(G,w)$ be a road network and let $x \in V(G)$. An edge $e = (u,v) \in 
  E(G)$ is \emph{$x$-admissible} in $G$ for a scope mapping $\cal{S}$ if, and 
  only if, there exists a walk $P \subseteq G - e$ from $x$ to $u$ such that
  \vspace{-1ex}
  \begin{enumerate}
    \item each edge of $P$ is $x$-admissible in $G - e$ for $\cal{S}$,
    \item $P$ is optimal subject to (1), and
    \item for $\ell={\cal{S}}(e)$,~
	$\sum_{f \in E(P),\, {\cal{S}}(f)>\ell}\, w(f) \,+\, \sum_{f\in 
        E(P),\, h_{\cal S}(f) > \ell \geq {\cal{S}}(f)}\,
	\kappa^{\cal{S}}_{\ell} \,\le\> \nu^{\cal{S}}_{\ell}$.
  \end{enumerate}
\end{definition}

Note; every edge $e$ such that ${\cal S}(e)=\infty$ ({\em unbounded scope} 
level) is always admissible, and with the values of $\nu^{\cal{S}}_i$ growing 
to infinity, Def.~\ref{def:xadmissible} tends to admit more and more edges 
(of smaller scope).

\begin{definition}[Admissible Walks]
  \label{def:stadmissible}
  Let $(G,w)$ be a road network and $\cal{S}$ a scope mapping. For a walk
  $P=(s=u_0,e_1,\dots e_k,u_k=t)\subseteq G$ from $s$ to~$t$;
  \vspace{-1ex}
  \begin{itemize}
  \item $P$ is \emph{$s$-admissible} in $G$ for $\cal{S}$ if every $e_i 
    \in E(P)$ is $s$-admissible in $G$ for $\cal{S}$,
  \item and $P$ is \emph{$st$-admissible} in $G$ for $\cal{S}$ if there 
    exists $0\leq j\leq k$ such that every $e_i \in E(P)$, $i\leq j$ is 
    $s$-admissible in $G$, and the reverse of every $e_i\in E(P)$, $i>j$ 
    is $t$-admissible in $G^R$ -- the graph obtained by reversing all edges.
  \end{itemize}
\end{definition}

\subsubsection{Proper Scope Mapping.}

In a standard connectivity setting, a graph (road network) $G$ is 
{\em routing-connected\/} if, for every pair of edges $e,f\in E(G)$, there 
exists a walk in $G$ starting with $e$ and ending with $f$. This obviously 
important property can naturally be extended to our scope concept as follows.

\begin{definition}[Proper Scope]
  \label{def:properscope}
  A scope mapping $\cal{S}$ of a routing-connected graph $G$ is {\em proper} 
  if, for all $i\in Im({\cal{S}})$, the subgraph $G^{[i]}$ induced by those 
  edges $e\in E(G)$ such that ${\cal S}(e)\geq i$ is routing-connected, too. 
\end{definition}

Note that validity of Definition~\ref{def:properscope} should be enforced in 
the scope-assignment phase of preprocessing (e.g., the assignment should 
reflect known long-term detours on a highway\footnote{Note, regarding a
real-world navigation with unexpected road closures, that the pro\-per-scope 
issue is not at all a problem---a detour route could be computed from 
the spot with ``refreshed'' scope admissibility constrains. Here we solve the 
static case.} accordingly). The topic of connectivity requirements for 
assigned scope levels deserves a bit closer explanation. A good scope 
mapping primarily reflects the road quality (given by its category and 
other attributes) as given in the road network metadata. This simple 
assignment is, however, not directly usable in practice due to rather 
common errors in road network data. Such errors typically display themselves 
as violations of connectivity at different scope levels, i.e.\ as 
violations to Def.~\ref{def:properscope}. Hence these errors can be 
algorithmically detected and subsequently repaired (preferably with help 
of other road network metadata) so that the final scope mapping conforms 
to Def.~\ref{def:properscope}--being a proper scope. 

\vspace{-2ex}
\subsubsection{Existence of admissible walks. } A natural conclusion of 
routing connectivity (proper scope) of a road network then reads 

\begin{theorem}
  Let $(G,w)$ be a routing-connected road network and let $\cal{S}$ 
  be a~proper scope mapping of it. Then, for every two edges $e=(s,x)\in E(G)$,
  $f=(y,t)\in E(G)$, there exists a $st$-admissible walk $P\subseteq G$
  for $\cal S$ such that $P$ starts with the edge $e$ and ends with~$f$. 
\end{theorem}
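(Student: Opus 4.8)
The plan is to assemble $P$ from three legs: an $s$-admissible first leg $P_1$ that starts with $e$ and runs into the unbounded-scope subgraph $G^{[\infty]}$; a middle leg $M$ lying entirely inside $G^{[\infty]}$; and a last leg $P_3$ that ends with $f$ and whose reversal is a $t$-admissible walk of $G^R$ running into $(G^R)^{[\infty]}=(G^{[\infty]})^R$. Since $\infty\in Im({\cal S})$, properness (Def.~\ref{def:properscope}) makes $G^{[\infty]}$ routing-connected, so there is a walk $M$ in $G^{[\infty]}$ that starts with the last edge of $P_1$ and ends with the first edge of $P_3$; and because every $\infty$-scope edge is admissible from every source (and its reversal is $t$-admissible from every source of $G^R$), all edges of $M$ are at once $s$-admissible and, reversed, $t$-admissible. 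Splicing $P_1$, $M$, $P_3$ together along their shared $\infty$-scope edges yields a walk $P$ from $s$ to $t$ that starts with $e$ and ends with $f$, and in which the split index $j$ of Def.~\ref{def:stadmissible} may be placed anywhere inside the $M$-part; hence $P$ is $st$-admissible. By the symmetry $G\leftrightarrow G^R$ --- the reversed scope mapping is again proper --- it remains only to build the first leg $P_1$.

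For this I would prove an \emph{escape lemma}: for every $e=(s,x)$ there is an $s$-admissible walk starting with $e$ whose last edge has scope $\infty$. The edge $e$ is itself $s$-admissible, witnessed by the empty walk at $s$: clauses~(1)--(2) of Def.~\ref{def:xadmissible} then hold trivially, and clause~(3) reduces to $0\le\nu^{\cal S}_{{\cal S}(e)}$, which holds since $\nu^{\cal S}_0=0$. Now list the scope levels as $0=\lambda_0<\lambda_1<\dots<\lambda_r=\infty$ and induct upward on $k$: given an $s$-admissible walk $P'$ that starts with $e$ and ends with an edge of scope $\lambda_k<\infty$, routing-connectivity of $G^{[\lambda_k]}$ (properness once more) supplies a walk inside $G^{[\lambda_k]}$ from that edge to some edge of scope $\ge\lambda_{k+1}$; choosing it of least weight, every edge of it that precedes the first one of scope $\ge\lambda_{k+1}$ has scope exactly $\lambda_k$, and we append that initial segment to $P'$. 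After at most $r$ rounds the walk ends with an $\infty$-scope edge.

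The crux --- and the main obstacle --- is to verify that every edge appended in the inductive step is indeed $s$-admissible in the precise sense of Def.~\ref{def:xadmissible}. Two points need care. Clause~(2) speaks of the \emph{optimal} $s$-admissible walk to the tail of the appended edge, not of the walk we have in hand (that is, $P'$ extended by the chosen segment), so one must show that the shortest $s$-admissible walk reaching that tail carries weight at most $\nu^{\cal S}_{\lambda_k}$ on edges of scope above $\lambda_k$; this is exactly the ``still near the start'' regime that Def.~\ref{def:xadmissible} is built to model --- the tail lies in the part of the network where level $\lambda_k$ has not yet been superseded by widely available higher levels --- and the strict ordering $\nu^{\cal S}_0<\nu^{\cal S}_1<\dots$ is what one exploits here. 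Clause~(3) additionally levies a handicap $\kappa^{\cal S}_{\lambda_k}>0$ for each edge of that walk whose turn-scope handicap exceeds $\lambda_k$ (a ``missed slip-road''), so one must also bound the number of such edges along the relevant optimal walk, again using routing-connectivity of the subgraphs $G^{[\lambda_i]}$ together with the minimal choice made in the inductive step. Granting these bounds, the escape lemma follows; combining the first leg $P_1$ with the symmetric last leg $P_3$ and a routing-connecting walk $M$ of $G^{[\infty]}$, as above, produces the desired $st$-admissible walk from $e$ to $f$.
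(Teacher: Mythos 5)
Your architecture is genuinely different from the paper's: you try to force the walk up into the unbounded-scope subgraph $G^{[\infty]}$ via an ``escape lemma'' that climbs the scope levels upward, whereas the paper runs a downward induction on $i=\min({\cal S}(e),{\cal S}(f))$, takes any walk $P^i\subseteq G^{[i]}$ from $e$ to $f$ supplied by properness, and splices in a (recursively obtained) admissible higher-level walk exactly between the \emph{first} position from $s$ and the symmetric last position from $t$ at which level-$i$ admissibility is threatened, i.e.\ where an edge of scope $\geq i+1$ or a turn-handicap $\geq i+1$ first occurs. The point of that choice is that on the retained low-level prefix the sum in Def.~\ref{def:xadmissible}(3) is still zero, so no quantitative bound ever needs to be verified, and the constructed walk only climbs as high as it must --- it need not reach $G^{[\infty]}$ at all.

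The gap in your proposal is exactly where you write ``Granting these bounds, the escape lemma follows'': those bounds are the entire content of the lemma, and your sketch does not establish them. Concretely, in the inductive step the connecting walk inside $G^{[\lambda_k]}$ supplied by routing-connectedness can be arbitrarily long, and Def.~\ref{def:xadmissible}(2)--(3) judges each appended level-$\lambda_k$ edge $g$ against the \emph{optimal} $s$-admissible walk to the tail of $g$, not against the walk you have built. If some tail along your connecting segment is optimally reached from $s$ by a route spending more than $\nu^{\cal S}_{\lambda_k}$ on edges of scope $>\lambda_k$ (a parallel highway, say), then $g$ is simply not $s$-admissible; neither properness, nor the least-weight choice of your connecting segment, nor the ordering $\nu^{\cal S}_0<\nu^{\cal S}_1<\cdots$ rules this out --- indeed this is precisely the intended ``far from the start'' regime in which low levels are switched off. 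So the inductive step fails as stated. The surrounding frame (that $e$ itself is $s$-admissible via the empty witness walk, that $\infty$-scope edges are always admissible, the reduction to one direction by symmetry in $G^R$, and the placement of the split index $j$ inside the middle leg) is fine; but the escape lemma carries all the load and is not proved.
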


\begin{proof}
We refer to the notation of Definition~\ref{def:properscope}. Let 
$i=\min({\cal S}(e),{\cal S}(f))$, and so $e,f\in G^{[i]}$. Since $\cal S$ is 
proper, there exists a walk $P^i\subseteq G^{[i]}$ starting with the edge $e$ 
and ending with $f$, and $|P^i|_w<\infty$. If $i=\infty$, then we are done
since every edge of $G^{[\infty]}$ is automatically $st$-admissible (regardless 
of $s,t$). So, by means of contradiction, we assume that $i\in\mathbb N\cap 
Im({\cal S})$ is highest possible such that the theorem failed, i.e.\ no such 
$P^i$ is $st$-admissible.

\medskip 

Let $P^i=(v_0,e_1,\dots,e_k,v_k)$. We find maximal index $1\leq p< k$ such that
${\cal S}(e_p)\geq i+1$ or that $h_{\cal S}(e_p)\geq i+1$ (where the value of 
$h_{\cal S}(e_p)$ is witnessed by an edge $f_0$ of ${\cal S}(f_0)\geq i+1$ starting
in $v_p$). Informally, $p$ is the least position on $P^i$ where the scope 
admissibility condition gets affected at level~$i$\,---see the formula in 
Def.~\ref{def:xadmissible}. Since $P^i$ is not $st$-admissible by our 
assumption, this $p$ is well defined. In the first case, ${\cal S}(e_p)\geq 
i+1$, we set $e'=e_p$. In the second case, ${\cal S}(e_p)\leq i$ and 
${\cal S}(f_0)\geq i+1$, we set $e'=f_0$. We also denote by $s'$ the tail of 
$e'$. We symmetrically define $f'$ and $t'$ in the reverse walk of $P^i$ in
$G^R$.

\medskip 

Now $j=\min({\cal S}(e'),{\cal S}(f'))\geq i+1$. Since $\cal S$ is proper, 
there exists a walk $P^j\subseteq G^{[j]}$ starting with the edge $e'$ and 
ending with $f'$, and $|P^j|_w<\infty$. Moreover, by maximality of our choice 
of $i$, this $P^j$ can be chosen such that $P^j$ is $s't'$-admissible. Then 
we define $P'\subseteq P^i\cup P^j$ as the walk which starts and ends as 
$P^i$ while using $P^j$ in the section ``between'' $e'$ and $f'$. By the 
least choice of $e'$ ($f'$) above, all edges of $P'$ are clearly also 
$st$-admissible (which follows from admissibility of all edges of $P^j$ at 
levels $>i$ there). The proof is finished.
\qed
\end{proof}

\subsubsection{Well distributed scope. } There is another, more vague, 
technical requirement on a useful scope mapping, which becomes particularly 
important in the opening cellular stage of the query algorithm in 
Section~\ref{sec:query}: for each bounded scope level there should be no 
relatively large areas (subnetworks) not penetrated by any road of higher 
scope level. This requirement is formally described as that the subgraph
$G-V(G^{[i]})$, i.e.\ the subgraph induced by those vertices incident only 
with edges of scope level $<i$, has no relatively large connected components 
for each $i\in Im({\cal S})$. Such as, for $i=\infty$ the words ``no relatively
large'' mean no component of size significantly larger than the expected cell 
size. For smaller values of $i$ the size limit on components is accordingly 
smaller. We then say that the scope mapping $\cal S$ is {\em well-distributed} 
in $G$.

Again, as in the case of proper scope, the requirement of having 
well-distributed scope mapping is not strictly a subject of the theoretical 
foundation of scope. Instead, wisely designed road networks (with the 
corresponding metadata) should conform to such a requirement automatically.
In other words, if a scope level assignment does not produce a well-distributed
scope mapping, then there is something wrong; either directly in the network 
design, or in the provided metadata (or in the way we understand it). Yet, 
violations of the well-distributivity property can be easily detected in an 
algorithmic way, and also automatically repaired via taking some heuristically 
selected shortest walks across the large component, and ``upgrading'' them 
into higher scope levels.

\subsection{ $\cal{S}$-Dijkstra's Algorithm and $\cal{S}$-Reach}
\label{sub:Sdijkstra}

As noted beforehand, Def.~\ref{def:stadmissible} smoothly integrates into
bidirectional Dijkstra's or A* algorithm, simply by keeping track of the
admissibility condition (3.):

\begin{description}
\item[$\cal{S}$-Dijkstra's Algorithm {\rm(one direction of the search).}]
\item[~\,--]\smallskip
  For every accessed vertex $v$ and each scope level $\ell\in Im({\cal{S}})$,
  the algorithm keeps, as $\sigma_\ell[v]$, the best achieved value 
  of the sum.\ formula in Def.~\ref{def:xadmissible}\,(3.).
\item[~\,--] The $s$-admissibility of edges $e$ starting in $v$ depends then
  simply on $\sigma_{{\cal S}(e)}[v]\leq \nu^{\cal{S}}_{{\cal S}(e)}$, and only 
  $s$-admissible edges are relaxed further.
\end{description}

The full details follow here, in Algorithm~\ref{alg:SDijkstra}. Notice 
that, although a route planning position in a road network is generally 
an edge (not just a vertex) there is no loss of generality if we consider 
only vertices as the start and target positions of a route planning 
query. In the general case when a position is an edge $f$, we may simply 
subdivide $f$

\begin{algorithm}[H]
\caption{~$\cal{S}$-Reach-based $\cal{S}$-Dijkstra's Algorithm}
\label{alg:SDijkstra}
\begin{algorithmic}[1]    
\smallskip
\REQUIRE A road network $(G,w)$, a scope ${\cal{S}}$, a start vertex 
  $s \in V(G)$, an $\cal{S}$-reach $r^{\cal{S}}$.
\smallskip
\ENSURE For every $v \in V(G)$, an optimal $s$-admissible walk from $s$ to $v$ 
  in $G$ (or $\infty$).
\end{algorithmic}
\smallskip
\underline{\textsc{Relax}$(u,v,\gamma)$} \vskip 2pt
\begin{algorithmic}[1]    
  \IF[\hfill// Temporary distance estimate updated.]{$d[u] + w(u,v) < d[v]$}
    \STATE $Q \is Q \cup \{v\}$
    \STATE $d[v] \is d[u] + w(u,v)$;~$\pi[v] \is u$
  \ENDIF
  \IF[\hfill// Scope admissibility vector updated.]{$d[u] + w(u,v) \le d[v]$}
    \FORALL{$i \in Im({\cal{S}})$}
      \STATE $\sigma_i[v] \leftarrow \min\{\sigma_i[v],\>\sigma_i[u]+\gamma_i\}$ 
    \ENDFOR
  \ENDIF
  \RETURN
\end{algorithmic}

\medskip
\underline{\textsc{${\cal{S}}$-Dijkstra}$(G,w,{\cal{S}},s,r^{\cal{S}})$} 
\vskip 3pt
\begin{algorithmic}[1]
 \FORALL[\hfill// Initialization.]{$v \in V(G)$}
  \STATE $d[v] \is \infty$;~$\pi[v] \is \bot$;~
  \COMMENT{\hfill// Distance estimate and predecessor.}
  \STATE $\sigma[v] \leftarrow (\infty,\ldots,\infty)$
  \COMMENT{\hfill// Scope admissibility vector.}
 \ENDFOR
 \STATE $d[s] \is 0$;~ $Q \is \{s\}$;~$\sigma[s] \is (0,\ldots,0)$
 \WHILE[\hfill// Main loop processing all vertices.]{$Q \neq \emptyset$}
  \STATE $u \is  \min_{d[]}(Q)$;~ $Q \is Q \setminus\{u\}$
  \COMMENT{\hfill// Pick a vertex $u$ with the minimum $d[u]$.}
  \FORALL[\hfill// All edges from $u$; subject to]{$f=(u,v)\in E(G)$}
   \IF[\hfill// $\cal{S}$-reach check, and]{$r^{\cal{S}}(v) \geq d[u] + w(f)
	 \>\lor \> {\cal{S}}(f) = \infty$}
     \IF[\hfill// $s$-admissibility check.]{$\sigma_{{\cal S}(f)}[u]\leq 
       \nu^{\cal{S}}_{{\cal S}(f)}$}
      \FORALL[\hfill// Adjustment to scope admissibility.]{$i\in Im({\cal{S}})$}
        \OLELIF{${\cal{S}}(f)>i$}{$\gamma_i\is w(f)$}{$\gamma_i \is 0$}
        \OLIF{$h_{\cal S}(f)>i\geq{\cal S}(f)$}
             {$\gamma_i\is\gamma_i+\kappa^{\cal{S}}_{i}$}
      \ENDFOR
      \STATE \textsc{Relax}$(u,v,\gamma)$
      \COMMENT{\hfill// Relaxation of $f=(u,v)$.}
     \ENDIF
    \ENDIF
  \ENDFOR
 \ENDWHILE
 \STATE \textsc{ConstructWalk}$\,(G,d,\pi)$
 \COMMENT{\hfill// Postprocessing -- generating output.}
\end{algorithmic}
\end{algorithm}

\begin{theorem}
\label{thm:SDijkstra}
{\em$\cal{S}$-Dijkstra's algorithm}, for a road network $(G,w)$, a scope 
mapping ${\cal{S}}$, and a start vertex $s \in V(G)$, computes an optimal 
$s$-admissible walk from $s$ to every $v \in V(G)$ in time ${\cal O}\big(
|E(G)|\cdot|Im({\cal{S}})|+|V(G)|\cdot\log |V(G)|\big)$.
\end{theorem}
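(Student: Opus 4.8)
The plan is to establish correctness and running time separately, with correctness being the substantive part. For running time, the structure is that of ordinary Dijkstra with a Fibonacci-heap priority queue: each vertex is extracted from $Q$ at most once, and each edge $f=(u,v)$ is relaxed a bounded number of times. The only extra cost over classical Dijkstra is that, inside each relaxation, the algorithm loops over all scope levels $i\in Im({\cal S})$ to update the vector $\sigma[v]$ and to compute the auxiliary increments $\gamma_i$; this contributes a factor $|Im({\cal S})|$ to the per-edge work, giving the $|E(G)|\cdot|Im({\cal S})|$ term, while the heap operations contribute the usual $|V(G)|\cdot\log|V(G)|$ term. So that half is a short argument.

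For correctness I would prove the invariant that, whenever a vertex $u$ is extracted from $Q$ with final value $d[u]$, that $d[u]$ equals the weight of an optimal $s$-admissible walk from $s$ to $u$, and moreover each coordinate $\sigma_\ell[u]$ records the minimum, over all optimal $s$-admissible walks $P$ from $s$ to $u$, of the sum formula appearing in Def.~\ref{def:xadmissible}(3). The proof is by induction on the order of extraction, mirroring the classical proof of Dijkstra but carrying the admissibility bookkeeping along. The key point is that the definition of $s$-admissibility of an edge $e=(u,v)$ only refers to optimal $s$-admissible walks ending at $u$ inside $G-e$, and since $w$ is non-negative such a walk never uses $e$ anyway once we are looking at a shortest walk to $u$; hence the quantity tested by the algorithm, namely $\sigma_{{\cal S}(f)}[u]\le\nu^{\cal S}_{{\cal S}(f)}$, faithfully implements clause (3) at the moment $u$ is settled. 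One then checks that the $\gamma_i$ computed in the inner \textbf{for}-loop are exactly the per-edge contributions to the two sums in (3) (the first sum picks up $w(f)$ on levels below ${\cal S}(f)$, the second picks up $\kappa^{\cal S}_i$ on levels between ${\cal S}(f)$ and $h_{\cal S}(f)$), so that the update $\sigma_i[v]\is\min\{\sigma_i[v],\sigma_i[u]+\gamma_i\}$ propagates the invariant from $u$ to $v$ whenever the edge is admissible and lies on a shortest admissible walk.

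The delicate step is the simultaneous handling of the distance estimate and the scope vector when several optimal $s$-admissible walks reach the same vertex. This is why \textsc{Relax} updates $\sigma$ under the condition $d[u]+w(u,v)\le d[v]$ (non-strict), not just $d[u]+w(u,v)<d[v]$: among all equally short admissible prefixes we must retain the coordinatewise-best admissibility vector, since a later edge might be admissible via one such prefix but not another. I would argue that taking the coordinatewise minimum is sound because admissibility of a future edge $e$ at level $\ell={\cal S}(e)$ depends monotonically on a single coordinate $\sigma_\ell$, so a walk minimizing $\sigma_\ell$ is at least as good as any other optimal walk for the purpose of admitting $e$; and since this holds for every $\ell$ independently, the coordinatewise minimum dominates all optimal admissible walks simultaneously. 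The main obstacle, and the place where care is needed, is verifying that this monotonicity genuinely lets us reduce "exists an optimal $s$-admissible walk satisfying (3)" to a single scalar comparison against $\nu^{\cal S}_\ell$ — in particular that restricting to \emph{optimal} walks in clause (2) does not interact badly with the $\cal S$-reach pruning, which may discard a vertex $v$ from further processing; here one uses the standard fact (as with reach-based Dijkstra) that $r^{\cal S}$ is a valid upper bound on the relevant reach, so no edge lying on a genuinely optimal $s$-admissible walk is ever pruned. Once that is in place, applying the invariant at termination to each $v\in V(G)$, together with \textsc{ConstructWalk} following the $\pi$-pointers, yields the claimed optimal $s$-admissible walk (or $\infty$ if none exists), completing the proof.
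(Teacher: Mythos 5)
Your proposal is correct and follows essentially the same route as the paper's proof: both reduce correctness to the classical Dijkstra invariant, verify by induction that the per-level increments $\gamma_i$ accumulated into $\sigma$ realize the summation formula of Definition~\ref{def:xadmissible}\,(3) so that the test $\sigma_{{\cal S}(f)}[u]\leq \nu^{\cal{S}}_{{\cal S}(f)}$ identifies the $s$-admissible edges, check that the $\cal S$-reach test never discards an edge lying on an optimal $s$-admissible walk, and obtain the running time by charging an extra factor of $|Im({\cal{S}})|$ per relaxation on top of the usual heap cost. Your explicit treatment of the non-strict update in \textsc{Relax} and of the coordinatewise minimum over equally short admissible prefixes elaborates a point the paper compresses into ``straightforwardly verify,'' but it is the same argument.
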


\begin{proof}
We divide the proof of correctness of the algorithm into three steps:

\begin{enumerate}\parskip2pt
\item Assume that---in line 10---the algorithm correctly identifies the
  $s$-admissible edges $f$, and that---in line 9---no edge $f$ is discarded
  which is a part of some optimal $s$-admissible walk between the start $s$ and 
  any other $s$-saturated vertex $y$ (in $G$ for~$\cal S$). Then we can directly
  use a traditional proof of ordinary Dijkstra's algorithm to argue that the 
  computed results are optimal $s$-admissible walks starting from $s$ in the 
  network $(G,w)$ for scope $\cal S$.
  
  As to the claimed complexity bound, now it is enough to add a factor of 
  $|Im({\cal{S}})|$ (which can be regarded as a constant) for the loop on 
  line 11.

\item Suppose the assumption on line~9 got potentially violated, i.e.\ the
  particular edge $f=(u,v)$ of bounded scope belongs to an optimal 
  $s$-admissible walk $Q$ from $s$ to some $s$-sat\-urated $y$. Let 
  $Q'\subseteq Q$ be a subwalk starting with $x=s$ and ending with the edge 
  $f$ (i.e., in the vertex $v$). By Definition~\ref{def:Sreach}, the $\cal 
  S$-reach of $v$ is defined and at least $|Q'|_w\geq d[u]+w(f)$, and so the 
  condition on line~9 is evaluated as TRUE. The claim is done.

\item Hence it is enough to prove that the scope-admissibility vector 
  $\sigma$ is computed such that an edge $f=(u,v)$ starting in a vertex $u$ is 
  $s$-admissible if and only if 
  $\sigma_{{\cal S}(f)}[u]\leq \nu^{\cal{S}}_{{\cal S}(f)}$.
  By Definition~\ref{def:xadmissible}, and induction on the length (number of
  edges) of the discovering walk from $s$ to $u$; it is just enough to 
  straightforwardly verify that the adjustments to $\sigma$ accumulated in 
  lines 12--13 exactly correspond to the summation formula in 
  Definition~\ref{def:xadmissible}\,(3), and employ the known fact that $d[u]$ 
  is an optimal weight of a walk to~$u$. \qed
\end{enumerate}
\end{proof}

Furthermore, practical complexity of this algorithm can be largely decreased 
by a suitable adaptation of the {\em reach} concept (Def.~\ref{def:reach}),
given in Def.~\ref{def:Sreach}. For $x \in V(G)$ in a~road network with 
scope $\cal{S}$, we say that a vertex $u \in V(G)$ is $x$-\emph{saturated} 
if no edge $f=(u,v)$ of $G$ from $u$ of {bounded scope} (i.e., ${\cal S}
(f)<\infty$) is $x$-admissible for $\cal S$. A~walk $P$ with ends $s,t$ 
is {\em saturated} for $\cal S$ if some vertex of $P$ is both $s$-saturated 
in $G$ and $t$-saturated in the reverse network $G^R$.

\begin{definition}[$\cal{S}$-reach]
\label{def:Sreach}
Let $(G,w)$ be a road network and $\cal{S}$ its scope mapping. The 
\emph{$\cal{S}$-reach of $v \in V(G)$ in $G$}, den.~$r^{\cal{S}}(v)$, is the 
maximum value among $|P^{xv}|_w$ over all $x,y \in V(G)$ such that $y$ is 
$x$-saturated while $v$ is not $x$-saturated, and there exists an optimal 
$x$-admissible walk $P\subseteq G$ from $x$ to $y$ such that $P^{xv}$ is a 
subwalk of $P$ from $x$ to $v$. ~$r^{\cal{S}}(v)$ is undefined ($\infty$) if 
there is no such walk.
\end{definition}

There is no general easy relation between classical reach and $\cal{S}$-reach; 
they both just share the same conceptual idea. Moreover, $\cal{S}$-reach 
can be computed more efficiently (unlike reach) since the set of 
non-$x$-saturated vertices is rather small and local in practice, and only 
its $x$-saturated neighbors are to be considered among the values of $y$ 
in Def.~\ref{def:Sreach}.

The way $\cal{S}$-reach of Def.~\ref{def:Sreach} is used to amend 
$\cal{S}$-Dijkstra's algorithm is again rather intuitive; an edge 
$f=(u,v)\in E(G)$ is relaxed from $u$ only if ${\cal S}(f)=\infty$, or 
$r^{\cal{S}}(v)\geq d[u]+w(f)$. 

\begin{theorem}
\label{thm:SDijkstraA}
Assume {\em$\cal{S}$-Reach-based $\cal{S}$-Dijkstra's algorithm} (Alg. 
\ref{alg:SDijkstra}), with a road network $(G,w)$, a scope mapping ${\cal{S}}$,
and any upper bound $r^{\cal{S}}$ on the $\cal{S}$-reach in $G$. This algorithm, 
for a start vertex $s \in V(G)$ as its input, computes, for every $v \in V(G)$,
an optimal $s$-admissible walk from $s$ to $v$ in $G$ in time 
${\cal O}\big(|E(G)|\cdot|Im({\cal{S}})|+|V(G)| \cdot\log|V(G)|\big)$.
\end{theorem}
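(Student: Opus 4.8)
The plan is to derive Theorem~\ref{thm:SDijkstraA} from Theorem~\ref{thm:SDijkstra} by a short monotonicity argument on the line-9 reach test, rather than redoing the entire Dijkstra-style analysis. The guiding observation is that replacing the exact $\cal S$-reach by any \emph{upper} bound on it can only make the condition on line~9 succeed \emph{more} often, hence discard \emph{fewer} edges; since the exact-reach version was already proved correct in Theorem~\ref{thm:SDijkstra}, correctness is preserved, and since line~9 can only skip relaxations (never create them), the worst-case running time is unchanged.

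Concretely, I would first isolate where $r^{\cal S}$ actually enters the proof of Theorem~\ref{thm:SDijkstra}. Its step~(1) --- the reduction, under the two stated assumptions about lines~9 and~10, to the traditional Dijkstra correctness argument, together with the $|Im({\cal S})|$ factor in the complexity count --- and its step~(3) --- the verification that the adjustments $\gamma_i$ on lines~12--13 reproduce exactly the summation of Def.~\ref{def:xadmissible}(3), so that the admissibility vector $\sigma$ is maintained correctly --- are entirely local to the relaxation code and the ordinary priority-queue machinery; neither refers to $r^{\cal S}$, so both transfer verbatim. The only use of the reach is in step~(2): the assertion that line~9 never discards an edge $f=(u,v)$ that lies on an optimal $s$-admissible walk $Q$ from $s$ to an $s$-saturated vertex $y$.

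There the argument of step~(2) yields, for the prefix $Q'=Q^{sv}$ of such a walk, the inequality chain $(\text{exact }{\cal S}\text{-reach of }v)\ge |Q'|_w\ge d[u]+w(f)$, directly from Def.~\ref{def:Sreach} together with the fact (already needed in step~(3)) that at the moment $f$ is relaxed the estimate $d[u]$ is an optimal walk weight to $u$, hence at most $|Q^{su}|_w$. Since $r^{\cal S}$ is assumed to be an upper bound on the exact $\cal S$-reach, we then have $r^{\cal S}(v)\ge d[u]+w(f)$, so the test on line~9 still evaluates to TRUE and $f$ survives; thus step~(2) holds under the weaker hypothesis ``$r^{\cal S}$ is any upper bound on the $\cal S$-reach'', and with it the conclusion of Theorem~\ref{thm:SDijkstra}. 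For the complexity bound I would simply remark that, compared with plain $\cal S$-Dijkstra, the line-9 test costs $O(1)$ per examined edge and only ever suppresses relaxations, so the bound ${\cal O}(|E(G)|\cdot|Im({\cal S})|+|V(G)|\cdot\log|V(G)|)$ is retained, any real speedup being a practical rather than worst-case phenomenon. I do not foresee a genuine obstacle here; the only mildly delicate point is the same one already present in Theorem~\ref{thm:SDijkstra}, namely justifying $d[u]\le|Q^{su}|_w$ at relaxation time, which rests as before on $u$ being extracted with minimum $d$-value plus the inductive fact that the edges of $Q^{su}$, sitting on an optimal $s$-admissible walk to a saturated vertex, are themselves never pruned.
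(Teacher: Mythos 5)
Your proposal is correct and follows essentially the same route as the paper: the paper's own argument for Theorem~\ref{thm:SDijkstraA} is the three-step proof printed under Theorem~\ref{thm:SDijkstra} (whose step~(2) already derives $(\text{exact }{\cal S}\text{-reach of }v)\ge|Q'|_w\ge d[u]+w(f)$ from Def.~\ref{def:Sreach}), and your only addition is to make explicit the monotonicity step that an upper bound on the ${\cal S}$-reach can only weaken the line-9 pruning, which is exactly how the paper intends the generalization to be read. You correctly isolate that steps~(1) and~(3) are reach-independent and that the complexity bound is unaffected, so nothing is missing.
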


\begin{remark}
Notice that we have used an ``upper bound on the $\cal{S}$-reach'' in the
statement, although we can easily compute the $\cal{S}$-reach exactly.
The deeper reason for doing this generalization is that in a practical case of
multiple utility weight functions for the same road network, we may want to
store just one maximal instance among the $\cal{S}$-reach values for every
vertex in the network (instead of a separate value per each utility
function).
\end{remark}

However, our $\cal{S}$-reach amending scheme has one inevitable 
limit of usability---it becomes valid only if the both directions of Dijkstra's 
search get to the ``saturated'' state. (In the opposite case, the start and 
target are close to each other in a local neighborhood, and the shortest 
route is quickly found without use of $\cal{S}$-reach, anyway.) Hence we 
conclude:

\begin{theorem}
\label{thm:Sreach}
Let $s,t\in V(G)$ be vertices in a road network $(G,w)$ with a scope mapping
${\cal{S}}$. Bidirectional {\em$\cal{S}$-reach $\cal{S}$-Dijkstra's algorithm}
computes an optimal one among all $st$-admissible walks in $G$ from $s$ to~$t$
which are saturated for~$\cal S$.
\end{theorem}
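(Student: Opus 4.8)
The plan is to obtain Theorem~\ref{thm:Sreach} from the single-direction guarantee of Theorem~\ref{thm:SDijkstraA} combined with a structural decomposition of $st$-admissible walks. The bidirectional algorithm runs one copy of Algorithm~\ref{alg:SDijkstra} forward from $s$ in $(G,w)$ and a second copy backward from $t$ in the reverse network $(G^R,w)$; when the usual bidirectional termination criterion fires it outputs the lightest walk obtained by joining, at a commonly scanned meeting vertex~$v$, a forward branch rooted at $s$ with a backward branch rooted at $t$. By Theorem~\ref{thm:SDijkstraA} applied to each direction separately, the forward label $d_s[v]$ is the weight of an optimal $s$-admissible walk from $s$ to $v$ in $G$, while the backward label $d_t[v]$ is the weight of an optimal walk from $v$ to $t$ in $G$ all of whose reversed edges are $t$-admissible in $G^R$ -- for every vertex $v$ that the corresponding $\cal S$-reach pruning does not discard.

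The first ingredient is a decomposition lemma, immediate from Definition~\ref{def:stadmissible}: a walk $P=(s=u_0,e_1,\dots,e_k,u_k=t)$ is $st$-admissible if and only if, for some index $0\le j\le k$, the prefix up to $u_j$ is an $s$-admissible walk in $G$ and the reverse of the suffix from $u_j$ is a $t$-admissible walk in $G^R$. Since, by Definition~\ref{def:xadmissible}, ``$e$ is $s$-admissible'' is a property of the edge $e$ in $G$ alone (and dually for $t$-admissibility in $G^R$), in any $st$-admissible $P$ with split vertex $v=u_j$ one may replace the prefix by an optimal $s$-admissible $s\to v$ walk and the suffix by the reverse of an optimal $t$-admissible $t\to v$ walk; the outcome is again $st$-admissible with split vertex $v$ and has weight $d_s[v]+d_t[v]\le |P|_w$. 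Hence the minimum weight of an $st$-admissible walk equals $\min_v \big(d_s[v]+d_t[v]\big)$ over all $v$, and conversely every $v$ with both labels finite realises such a walk of weight exactly $d_s[v]+d_t[v]$ by concatenation.

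The genuinely delicate point is to reconcile this with the edges discarded by $\cal S$-reach pruning and with the early termination of the bidirectional search -- and it is here that the hypothesis ``saturated for $\cal S$'' is used. First, the $\cal S$-reach test is safe up to the saturation frontier: by Definition~\ref{def:Sreach} a non-$s$-saturated vertex that lies on an optimal $s$-admissible walk to some $s$-saturated vertex has $\cal S$-reach at least its distance from $s$, so no edge of such a walk is pruned in the forward search, and symmetrically in $G^R$. Now let $P$ be a minimum-weight \emph{saturated} $st$-admissible walk, and pick a vertex $v_0$ of $P$ that is both $s$-saturated in $G$ and $t$-saturated in $G^R$. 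Using that an $s$-admissible edge leaving an $s$-saturated vertex must have unbounded scope (and the mirror statement in $G^R$), one argues that the split index of $P$ can be moved onto such a $v_0$ without increasing the weight and without leaving the saturated class, so that the prefix $P_1$ of $P$ becomes an optimal $s$-admissible walk \emph{onto} the $s$-saturated vertex $v_0$ and the suffix $P_2$ is its mirror in $G^R$. One then verifies that the forward search reaches $v_0$ with $d_s[v_0]=|P_1|_w$ and the backward search reaches it with $d_t[v_0]=|P_2|_w$, so that $v_0$ is an eligible meeting vertex; the termination criterion cannot fire before $v_0$ has been scanned on both sides, hence the algorithm reports a walk of weight at most $d_s[v_0]+d_t[v_0]=|P|_w$. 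Conversely the reported walk is $st$-admissible by the decomposition lemma, and the vertex where the two pruned searches collide is, by the same frontier reasoning, simultaneously $s$-saturated in $G$ and $t$-saturated in $G^R$, so the reported walk is itself saturated; its weight is therefore at least the optimum over saturated $st$-admissible walks. The two inequalities give the statement.

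I expect the main obstacle to be the middle of the last paragraph: proving rigorously that a minimum-weight saturated $st$-admissible walk can always be re-split at a doubly-saturated vertex in a way that keeps both halves optimal and the walk saturated, and -- more delicately -- that the forward (resp.\ backward) search actually reaches that split vertex despite $\cal S$-reach pruning. The sensitive case is the last, possibly bounded-scope, edge entering the split vertex: since $\cal S$-reach in Definition~\ref{def:Sreach} is quantified over all sources, a bounded-scope edge into an $s$-saturated vertex need not have $\cal S$-reach at least its distance from $s$. Making precise what it means for ``both directions to reach the saturated state'', and showing that under this condition the $\cal S$-reach amendment is lossless on saturated walks -- so that the meeting vertex may be taken doubly-saturated -- is the heart of the argument.
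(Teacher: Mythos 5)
The paper offers no argument for Theorem~\ref{thm:Sreach} at all: it is dismissed with the single remark that it ``is directly implied by Theorem~\ref{thm:SDijkstraA}.'' Your proposal is exactly the right way to unpack that remark, and its first two paragraphs are sound: since $x$-admissibility of an edge (Def.~\ref{def:xadmissible}) is a property of the edge and the source alone, not of the containing walk, the prefix/suffix decomposition of Def.~\ref{def:stadmissible} does let you swap in optimal halves, so the optimum over $st$-admissible walks equals $\min_v\bigl(d_s[v]+d_t[v]\bigr)$ provided both labels are computed correctly at some split vertex of an optimal saturated walk.

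The obstacle you flag in your last paragraph is genuine, but it is a defect of Definition~\ref{def:Sreach} as literally written rather than of your argument, and the paper itself trips over the very same point: in step~2 of the proof of Theorem~\ref{thm:SDijkstra}, the claim that ``the $\cal S$-reach of $v$ is \dots\ at least $|Q'|_w$'' requires $v$ to be non-$s$-saturated, which fails precisely when $f=(u,v)$ is the final, bounded-scope edge of $Q$ entering the saturated endpoint $y=v$. The evidently intended reading is that the maximum in Def.~\ref{def:Sreach} also ranges over $v=y$ itself; under that reading your frontier argument closes, the split vertex of a minimum-weight saturated $st$-admissible walk $P$ may be taken to be a vertex of $P$ that is $s$-saturated in $G$ and $t$-saturated in $G^R$, and no edge of either half is pruned. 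One further small repair: the meeting vertex at which the bidirectional termination fires need not itself be doubly saturated, so you should not claim the reported walk is saturated; the content of the theorem is only that the output is $st$-admissible and of weight at most that of every saturated $st$-admissible walk, and that is exactly what your two inequalities establish. With these two points made explicit your proposal is a complete proof --- strictly more than the paper provides.
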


Notice that this strengthening Theorem~\ref{thm:Sreach} is directly implied 
by Theorem~\ref{thm:SDijkstraA} which also directly implies former Theorem 
\ref{thm:SDijkstra} (when setting $r^{\cal S}\equiv\infty$).

\section{The Route-Planning Algorithm -- Preprocessing}
\label{sec:algorithm}

Following the informal outline from the introduction, we now present the 
second major ingredient for our approach; a separator based partitioning of 
the road network graph with respect to a given scope mapping.

\subsection{Partitioning into Cells}
\label{sub:preprocessing}

At first, a road network is partitioned into a set of pairwise edge-disjoint
subgraphs called \emph{cells} such that their \emph{boundaries} (i.e., the
vertex-separators shared between a cell and the rest) contain as few as 
possible vertices incident with edges of unbounded scope. The associated 
formal definition follows.

\begin{definition}[Partitioning and Cells]
  \label{def_cell}
  Let ${\cal{E}} = \{E_1,\ldots,E_\ell\}$ be a partition of the edge set $E(G)$
  of a graph $G$. We call {\em cells} of $(G,{\cal E})$ the subgraphs $C_i = 
  G[E_i] \subseteq G$, for $i = 1,2, \ldots, \ell$. The {\em cell boundary} 
  $\Gamma(C_i)$ of $C_i$ is the set of all vertices that are incident both 
  with some edge in $E_i$ and some in $E(G)\setminus E_i$, and the 
  {\em boundary} of $\cal E$ is $\Gamma(G,{\cal E})=\bigcup_{1\leq i\leq\ell} 
  \Gamma(C_i)$.
\end{definition}

Practically, we use a graph partitioning algorithm hierarchically computing a 
so-called \emph{partitioned branch-decomposition} of the road network. The 
algorithm employs an approach based on max-flow min-cut which, though being 
heuristic, performs incredibly well---being fast in finding really good small 
vertex separators.\footnote{It is worth to mention that max-flow based 
heuristics for a branch-decomposition have been used also in other 
combinatorial areas recently, e.g.\ in the works of Hicks.} 

\subsubsection{Partitioning Algorithm.}
\label{app:partitioning}

Notice that we are decomposing the whole graph $G$ and not only its 
unbounded-scope subgraph $G^{[\infty]}$, this is because of the cellular stage 
of our query algorithm. A~brief outline of the partitioning method follows.

\begin{enumerate}
\item \emph{Simplification step.} At first, all weakly disconnected components 
  are placed into a special ``disconnected'' cell and then removed from the 
  road network to ensure that the rest is weakly-connected. Then the road 
  network is contracted as much as possible so that all self-loops and parallel
  edges are removed, each maximal induced subtree is cut away, all maximal 
  induced subpaths are contracted into single edges. This way we reduce the 
  size of the road network by 10\%, approximately.

  \medskip

\item \emph{Replacement step.} We are interested in vertex cuts and thus,
  prior to max-flow min-cut computation, we have to replace vertices by new 
  edges as follows: each vertex $v$ in the road network is replaced by a 
  new edge $(v_i,e_v,v_o)$ such that every edge with its end in $v$ has now 
  its end in $v_i$ and every edge with its start in $v$ has now its start in 
  $v_o$. Notice that any cut consisting of such new edges represents a vertex 
  cut in the original road network.

  \medskip

\item \emph{Capacity assignment.} In general, each newly introduced edge has
  unit capacity and capacities of the other edges are set to positive infinity 
  so that they will never be saturated during the max-flow computation and 
  hence only new edges could appear in a cut (so that they form a vertex cut
  in the original road network). It is also beneficial (to allow a certain 
  level of parallelism) to cut the graph into two subgraphs of~similar sizes 
  and hence we choose sufficiently distant vertices to be the source and the 
  sink and, moreover, every edge within some ``restricted distance'' from the 
  source or sink gets infinite capacity to prevent its saturation (this way we 
  can ensure that a cut appears somewhere in the ``middle'' between the source 
  and sink). Also, to minimize the size (the number of edges) of the strict 
  boundary graph we add some handicap to the capacities of those new edges 
  which are adjacent to network edges of unbounded scope level.

  \medskip

\item \emph{Max-flow min-cut computation.} As soon as all capacities are 
  assigned, the maximum flow is computed and we obtain a minimum vertex cut in 
  the original road network. Hence we cut the original road network into two 
  subnetworks (of similar size), i.e.\ into two super-cells. The whole process
  now continues from step (1) concurrently in these subnetworks until all
  cells have suitable size.

\end{enumerate}

Furthermore, our algorithm always leverages natural disposition of a road 
network such that each sufficiently big autonomous region (i.e. the US state)
is partitioned separately. The resulted partitions are then ``glued'' 
together using natural borders. Clearly, this process might create too big 
boundaries and then the algorithm unions adjacent borderline cells and tries 
to partition them better. Such a separate partitioning makes it possible to 
employ a very beneficial parallelism. 

\subsection{Bounday Graph Construction}

Secondly, the in-cell distances between pairs of boundary vertices are 
precomputed such that only the edges of unbounded scope are used. This 
simplification is, on one hand, good enough for computing optimal 
routes on a ``global level'' (i.e., as saturated for scope in the sense of
Sec.~\ref{sub:Sdijkstra}).
On the other hand, such a simplified precomputed distance graph 
(cf.~$B_{\cal E}$)
is way much smaller than if all boundary-pair distance were stored for each 
cell. See in Table~\ref{tab:preprocessing}, the last column.

We again give the associated formal definition and a basic statement whose
proof is trivial from the definition.

\begin{definition}[Boundary Graph; $\cal S$-restricted]
 \label{def:boundary}
 Assume a road network $(G,w)$ together with a partition ${\cal E}=\{E_1,
 \ldots,E_\ell\}$ and the notation of Def.~\ref{def_cell}. For a scope mapping 
 $\cal S$ of $G$, let $G^{[\infty]}$ denote the subgraph of $G$ induced by the 
 edges of unbounded scope level $\infty$, and let $C_i^{\infty}=G[E_i]\cap 
 G^{[\infty]}$.

 The {\em ($\cal S$-restricted) in-cell distance graph} $D_i$ of the cell $C_i$
 is defined on the vertex set $V(D_i)=\Gamma(C_i)$ with edges and weighting 
 $p_i$ as follows. For $u,v\in \Gamma(C_i)\cap V(C_i^{\infty})$ only, let 
 $\delta^\infty_w(u,v)$ be the distance in $C_i^{\infty}$ from $u$ to $v$, and let 
 $f=(u,v) \in E(D_i)$ iff $p_i(f):=\delta^\infty_w(u,v)<\infty$.

 \smallskip
 The weighted {\em ($\cal S$-restricted) boundary graph $B_{\cal E}$} of a road 
 network $(G,w)$ wrt.\ scope mapping $\cal S$ and partition $\cal{E}$
 is then obtained as the union of all the cell-distance graphs $D_i$ for 
 $i = 1,2,\ldots,\ell$, simplified such that for each bunch of parallel edges
 only one of the smallest weight is kept in $B_{\cal E}$.
\end{definition}

\begin{proposition}
\label{prop:BEdistance}
  Let $(G,w)$ be a road network, $\cal S$ a scope mapping of it, and ${\cal{E}}$
  a partition of $E(G)$. For any $s,t\in \Gamma(G,{\cal E})$, the minimum 
  weight of a walk from $s$ to $t$ in $G^{[\infty]}$ equals the distance from $s$
  to $t$ in $B_{\cal E}$.
\end{proposition}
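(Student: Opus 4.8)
The plan is to prove the two inequalities $\delta_{B_{\cal E}}(s,t) \le \delta^{\infty}_w(s,t)$ and $\delta^{\infty}_w(s,t) \le \delta_{B_{\cal E}}(s,t)$ separately, where $\delta^{\infty}_w$ denotes distance in $G^{[\infty]}$; here I write $\delta_{B_{\cal E}}$ for the distance in the boundary graph. The whole argument rests on the observation that every vertex of $B_{\cal E}$ lies in $\Gamma(G,{\cal E})$, that each edge of $B_{\cal E}$ descends (possibly after the parallel-edge simplification) from an edge of some cell-distance graph $D_i$, and that an edge $f=(u,v)\in E(D_i)$ has weight $p_i(f)=\delta^{\infty}_w$-restricted-to-$C_i^{\infty}$ between $u$ and $v$, which is an achievable walk length in $G^{[\infty]}$ since $C_i^{\infty}\subseteq G^{[\infty]}$.

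For the direction $\delta_{B_{\cal E}}(s,t)\le\delta^{\infty}_w(s,t)$: take an optimal walk $W$ from $s$ to $t$ in $G^{[\infty]}$ (if none exists both sides are $\infty$ and we are done). Since $\cal E$ partitions $E(G)$, the edge set of $W$ is partitioned among the cells; list the boundary vertices visited by $W$ in order, $s=z_0,z_1,\dots,z_m=t$, so that the subwalk of $W$ between consecutive $z_{j-1},z_j$ stays (edge-wise) inside a single cell $C_{i_j}$ — here I use that the endpoints $s,t$ are themselves boundary vertices and that whenever $W$ moves from one cell's edges to another's it must pass through a shared vertex, which by Def.~\ref{def_cell} is a boundary vertex. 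Each such subwalk lies in $C_{i_j}^{\infty}$, so $z_{j-1},z_j\in\Gamma(C_{i_j})\cap V(C_{i_j}^{\infty})$ and $\delta^{\infty}_{w}$-in-$C_{i_j}^{\infty}(z_{j-1},z_j)\le|W^{z_{j-1}z_j}|_w<\infty$, hence $(z_{j-1},z_j)\in E(D_{i_j})$ with weight at most that subwalk's length. Concatenating these edges (or their parallel-class representatives in $B_{\cal E}$, which only have smaller weight) gives a walk in $B_{\cal E}$ from $s$ to $t$ of weight $\le|W|_w$.

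For the reverse direction $\delta^{\infty}_w(s,t)\le\delta_{B_{\cal E}}(s,t)$: take an optimal walk $F$ from $s$ to $t$ in $B_{\cal E}$. Every edge of $F$ is, by definition of $B_{\cal E}$, a parallel-class representative of some $f=(u,v)\in E(D_i)$ with $w$-weight equal to $\delta^{\infty}_w$-in-$C_i^{\infty}(u,v)$; replace it by a witnessing optimal walk in $C_i^{\infty}\subseteq G^{[\infty]}$ between the same endpoints, of exactly that weight. Concatenating these witnessing walks turns $F$ into a walk in $G^{[\infty]}$ from $s$ to $t$ of the same total weight $|F|_w=\delta_{B_{\cal E}}(s,t)$, so $\delta^{\infty}_w(s,t)$ is no larger. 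Combining the two inequalities (and noting the $\infty$ cases match) finishes the proof.

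The routine bookkeeping — indexing the boundary vertices along $W$, checking $s,t\in\Gamma(G,{\cal E})$ so the endpoints genuinely serve as $z_0,z_m$, and tracking that the parallel-edge simplification in Def.~\ref{def:boundary} can only decrease weights — is all straightforward. The only point that needs a word of care, and the place where I would be most careful in writing it up, is the claim that a walk in $G^{[\infty]}$ decomposes at boundary vertices into per-cell subwalks: one must argue that the set of indices $j$ along $W$ at which the current edge and the next edge belong to different cells forces $W$ to be at a vertex incident with edges of two different cells, i.e.\ a boundary vertex — this is immediate from $\cal E$ being an edge partition and $C_i=G[E_i]$, but it is the hinge of the whole argument and worth stating explicitly.
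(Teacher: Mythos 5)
Your proof is correct: the two-inequality argument (decomposing an optimal walk in $G^{[\infty]}$ at the cell-switch vertices, which are necessarily boundary vertices, and conversely expanding each $B_{\cal E}$-edge into its witnessing in-cell walk) is exactly the routine verification the paper has in mind when it states that the proof "is trivial from the definition" and omits it. You also correctly flag the one point worth making explicit, namely that an edge partition forces cell transitions to occur at boundary vertices, and that $s,t\in\Gamma(G,{\cal E})$ guarantees the endpoints lie in $\Gamma(C_i)$ for the cells containing the first and last edges.
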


\subsection{Experimental Evaluation.} 

The prototype of the preprocessing algorithm is written in C and uses 
Ford-Fulkerson's max-flow min-cut algorithm and cells of approximately 
5000 edges. Minimum possible cell size is hard-coded to 2000 and maximum 
to 10000 edges. Publicly available road networks are taken from 
TIGER/Line 2010 \cite{Tiger2009} published by the U.S. Census Bureau using 
directed edges, i.e. every traffic lane is represented by one edge. The 
compilation is done by gcc 4.3.2 with -O2, and the preprocessing has been 
executed on a quad-core XEON machine, with 16GB RAM running Debian 5.0.4, 
GNU/Linux 2.6.26-2-xen-amd64. 

We have run our partitioning algorithm (Def.~\ref{def_cell}), in-cell distance 
computations (Def.~\ref{def:boundary}), and $\cal S$-reach computation 
(Def.~\ref{def:Sreach}) in parallel on a quad-core XEON machine with 16\,GB 
in 32 threads. A decomposition of the continental US road network into the 
boundary graph $B_{\cal E}$, together with computations of $\cal S$-reach in~$G$
(and $\cal S$-reach in $G^R$), distances in $D_i$s, and standard reach estimate
 for $B_{\cal E}$, took only \emph{192 minutes altogether}. This, and the tiny 
size of $B_{\cal E}$, are both very for potential practical applications in 
which the preprocessing may have to be run and the small boundary graphs separately stored for multiple utility weight functions (while the $\cal S$-reach 
values could still be kept in one maximizing). 

The collected data are briefly summarized in Table~\ref{tab:partitioning} and 
Table~\ref{tab:preprocessing}. To get a taste of the topic, some examples of 
local cell partitioning results are depicted in Fig. \ref{fig:preprocessing1} 
and~\ref{fig:preprocessing2}.


\begin{table}[H]
  \caption{Partitioning results for the TIGER/Line 2009 \cite{Tiger2009} US 
    road network. The left section identifies the (sub)network and its size, 
    the middle one the numbers and average size of partitioned cells, and the
    right section summarizes the results---the boundary graph size data,
    with percentage of the original network size. This boundary graph 
    $B_{\cal E}$ is wrt.\ the simple scope assignment of Table~\ref{def:scope}, 
    and $V(B_{\cal E})$ includes also isolated boundary vertices (i.e.\ those 
    with no incident edge of unbounded scope). Notice the tiny size ($<\!1$\%) 
    of $B_{\cal E}$ compared to the original road network, and the statistics
    regarding {\em cell boundary sizes}: maximum is 74, average 19, median
    18, and 9-decil is 31 vertices.}
  \label{tab:preprocessing}
  \centering
  \begin{tabular}{@{~}l@{\quad}r@{\quad}|@{\quad}r@{\quad}c@{\quad}|@{\quad}r@{\
        \quad}r}
    \hline \hline 
    \multicolumn{2}{c|@{\quad}}{\it Input $G$} & 
	\multicolumn{2}{c|@{\quad}}{\it Partitioning} & 
	\multicolumn{2}{c}{\it Boundary Graph $B_{\cal E}$} \\
    \footnotesize{Road network}    & \footnotesize{\#Edges~~}  & 
    \footnotesize{\#Cells} &  \footnotesize{Cell sz.}  & 
    \footnotesize{\#Vertices} & \footnotesize{\#Edges /~\%~size~} \\
    \hline
    USA-all    & 88 742 994 & 15 862  & 5 594 &  253 641 & 
    524 872 / 0.59\% \\
    USA-east   & 24 130 457 &  4 538  & 5 317 &  62 692 & 
    107 986 / 0.45\% \\
    USA-west   & 12 277 232 &  2 205  & 5 567 &  23 449 &  
    42 204 / 0.34\% \\
    Texas      &  7 270 602 &  1 346  & 5 366 &  17 632 &  
    36 086 / 0.50\% \\
    California &  5 503 968 &  1 011  & 5 444 &  11 408 &  
    16 978 / 0.31\% \\
    Florida    &  3 557 336 &    662  & 5 373 &  5 599 &  
    25 898 / 0.73\% \\
    \hline
  \end{tabular}
   \vspace{-1ex}
\end{table}

\begin{figure}[ht!]
  \centering
  \centerline{\epsfig{file=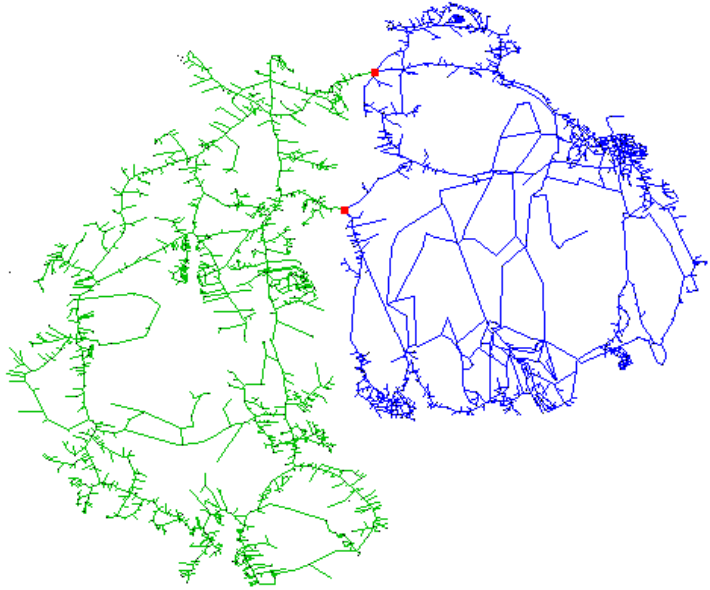, scale=.57}}
  \caption{The road network partitioned into two cells (blue and green) with
    boundaries of size two (red), notice that each walk starting in a cell and 
    ending in another must pass the boundary. Acadia National Park, Mount 
    Desert, ME, United States.}
  \label{fig:preprocessing2}
\end{figure}

\begin{figure}[H]
  \centering
  \centerline{\epsfig{file=./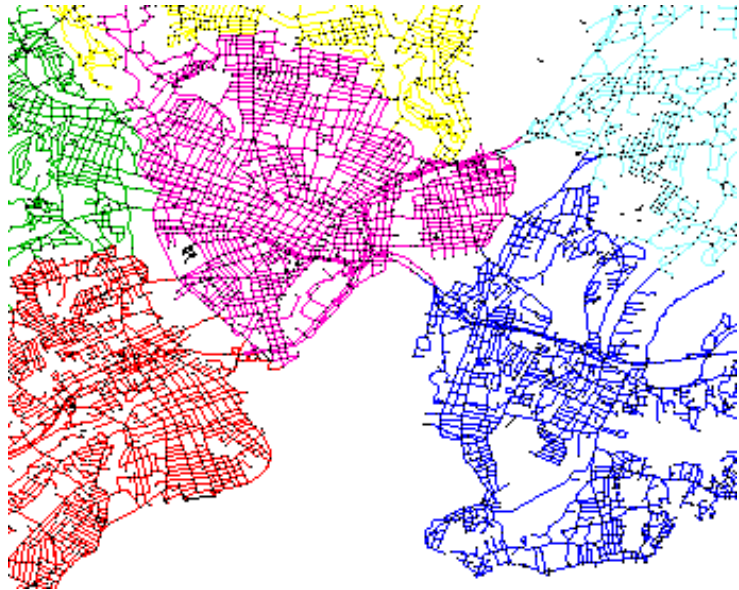, scale=0.8}}
  \caption{The partition computed using a max-flow algorithm respects the 
    natural road network disposition so that cells boundaries often lie along 
    the rivers, railways, canyons, etc. This, in particular, leads to smaller 
    boundaries. New Haven, CT, United States.}
  \label{fig:preprocessing1}
  \vspace{-3ex}
\end{figure}

\begin{table}[H]
  \caption{The example of the partitioning of several US states.}
  \label{tab:partitioning}
  \renewcommand\arraystretch{1.1}
  \centering
  \begin{tabular*}{\textwidth}{@{\quad}c@{\qquad}c@{\qquad}c@{\qquad}c@{\qquad}cc@{\qquad}}
    \hline \hline
    \textit{Road network} & \textit{\#Vertices} & \textit{\#Edges} & 
    \textit{\#Cells} & \textit{Avg. cell size}   & \textit{Time}  \\
    \hline
    Alabama   & 811 434 & 1 926 052 & 338 & 5 698 & 11 min \\
    Indiana   & 800 295 & 1 975 898 & 372 & 5 311 & 16 min \\
    Michigan  & 860 421 & 2 145 960 & 399 & 5 378 & 61 min \\
    Minnesota  & 908 292 & 2 166 138 & 383 & 5 666 & 18 min \\
    Arizona   & 893 163 & 2 184 866 & 420 & 5 202 & 25 min \\
    Georgia   & 945 212 & 2 226 392 & 400 & 5 655 & 14 min \\
    New York  & 890 684 & 2 236 530 & 422 & 5 299 & 47 min \\
    North Carolina & 1 104 258 & 2 497 764 & 409 & 6 107 & 12 min \\
    Oklahoma     & 1 049 680 & 2 508 862 & 436   & 5 754 & 23 min \\
    Ohio         & 1 085 287 & 2 761 920 & 483   & 5 404 & 38 min \\
    Illinois     & 1 085 287 & 2 761 920 & 530   & 5 211 & 21 min \\
    Missouri     & 1 291 751 & 3 020 152 & 519   & 5 891 & 34 min \\
    Pennsylvania & 1 263 737 & 3 081 096 & 574   & 5 367 & 26 min \\
    Virginia     & 1 489 661 & 3 333 864 & 479   & 6 960 & 20 min \\
    Florida      & 1 387 005 & 3 488 194 & 655   & 5 325 & 30 min \\
    California   & 2 178 025 & 5 394 762 & 1 010 & 5 341 & 86 min \\
    Texas        & 2 884 762 & 7 194 984 & 1 336 & 5 385 & 135 min \\
    \hline
  \end{tabular*}
\end{table}

\section{The Route Planning Algorithm -- Queries}
\label{sec:query}

Having already computed the boundary graph and $\cal S$-reach in the 
preprocessing phase, we now describe a natural simplified two-stage query 
algorithm based on the former. In its {\em cellular} stage, as outlined in the 
introduction, the algorithm runs ${\cal{S}}$-Dijkstra's search until all its
branches get saturated at cell boundaries (typically, only one or two adjacent 
cells are searched). Then, in the {\em boundary} stage, virtually any 
established route planning algorithm may be used to finish the search 
(cf.~Prop.~\ref{prop:BEdistance}) since $B_{\cal E}$ is a relatively small 
graph (Table~\ref{tab:preprocessing}) and is free from scope consideration.
E.g., we use the standard reach-based A$^*$ \cite{Goldberg2005A}.

\begin{description}
\item[Two-stage Query Algorithm \rm(simplified).]
  Let a road network $(G,w)$, a proper scope mapping ${\cal{S}}$, an 
  $\cal S$-reach $r^{\cal{S}}$ on $G$, and the boundary graph $B_{\cal E}$ 
  associated with an edge partition $\cal E$ of $G$, be given. Assume start 
  and target positions $s,t \in V(G)$; then the following algorithm computes, 
  from $s$ to~$t$, an optimal $st$-admissible and saturated walk in $G$ for 
  the scope~$\cal S$.
\end{description}
\vspace*{-2ex}
\begin{enumerate}
\item \label{it:stepi}\textit{Opening cellular stage.}
  Let $I_s\subseteq G$ initially be the subgraph formed by the cell (or a 
  union of such) containing the start $s$. Let $\Gamma(I_s)\subseteq V(I_s)$ 
  denote the actual boundary of $I_s$, i.e.\ those vertices incident both 
  with edges of $I_s$ and of the complement (not the same as the union of cell 
  boundaries).
  \begin{enumerate}
  \smallskip
  \item\label{it:stepa}
    Run {\em$\cal S$-reach ${\cal{S}}$-Dijkstra's} algorithm (unidir.) on 
    $(I_s,w)$ starting from~$s$.
  \item  Let $U$ be the set of non-\,$s$-saturated vertices in $\Gamma(I_s)$
    accessed in (\ref{it:stepa}). As long as $U\not=\emptyset$,~ let $I_s\is 
    I_s\cup J_U$ where $J_U$ is the union of all cells containing some vertex 
    of~$U$, and continue with (\ref{it:stepa}).
  \end{enumerate}
  \smallskip
  An analogical procedure is run concurrently in the reverse network $(G^R,w)$ 
  on the target $t$ and $I_t$. If it happens that $I_s,I_t$ intersect and a 
  termination condition of bidirectional Dijkstra is met, then the algorithm 
  stops here.
  \smallskip
\item \label{it:stepbd}\textit{Boundary stage. } Let $B_{s,t}$ be the graph 
  created from $B_{\cal E}\cup\{s,t\}$ by adding the edges from $s$ to each 
  vertex of $\Gamma(I_s)$ and from each one of $\Gamma(I_t)$ to $t$. The
  weights of these new edges equal the distance estimates computed 
  in~(\ref{it:stepi}). (Notice that many of the weights are actually 
  $\infty$\,---can be ignored---since the vertices are inaccessible, e.g., due 
  to scope admissibility or $\cal S$-reach.)
  \par\smallskip
   Run the {\em standard reach-based A$^*$} algorithm on the weighted graph 
   $B_{s,t}$ (while the reach refers back to $B_{\cal E}$), to find an optimal 
   path $Q$ from $s$ to~$t$.
   \smallskip
 \item \textit{Closing cellular stage.} The path $Q$ computed in 
   (\ref{it:stepbd}) is easily ``unrolled'' into an optimal $st$-admissible 
   saturated walk $P$ from $s$ to $t$ in the network $(G,w)$.
\end{enumerate}

A simplification in the above algorithm lies in neglecting possible 
non-saturated walks between $s,t$ (cf.\ Theorem~\ref{thm:Sreach}), which may 
not be found by an $\cal S$-reach-based search in (\ref{it:stepi}). This 
happens only if $s,t$ are very close in a local neighborhood wrt.~$\cal S$.

We provide here a formal description of our query algorithm in pseudocode,
see Algorithm~\ref{alg:SQuery}. One minor aspect of this algorithm dealing 
with possible ``unsaturated'' optimal walks in short-distance queries which 
cannot be properly handled by $\cal S$-reach-based search, is closely 
described next. 

We also briefly comment on the use of Dijkstra's and A* algorithms in the
routing query Algorithm~\ref{alg:SQuery}: While A* is obviously much better 
in general situations, and it is used in the boundary stage, the cellular 
stage is genuinly different in the fact that the search has to get to an 
$s$-saturated state in all directions, and so A* would be of no help in the 
cellular stage (the speed-up there is achieved by different means).

One more term related to Definition~\ref{def_cell} of graph partitioning and
cells is needed in the algorithm pseudocode: For $v\in V(G)$ let $J_{\cal E}(v)$
be the set of all indices $1\leq j\leq \ell$ such that $v\in V(C_j)$.

\begin{algorithm}
\caption{~Route Planning Query Algorithm}
\label{alg:SQuery}
\begin{algorithmic}[1]   
  \smallskip
  \REQUIRE A road network $(G,w)$, a proper scope mapping ${\cal{S}}$, the
  associated $\cal{S}$-reach $r^{S}$ (or an upper bound on it), a
  partition $\cal E$ of $E(G)$ with the associated boundary graph $B_{\cal E}$ 
  (of distance weighting $p$), and start and target vertices $s,t \in V(G)$. 
  \medskip
  \ENSURE An optimal $st$-admissible walk from $s$ to $t$ in this road network 
  (or $\infty$).
\end{algorithmic}
\medskip
\underline{\textsc{Query}$(G,w,{\cal{S}},{\cal{E}},B_{\cal E},r^{\cal{S}},s,t)$} 
\vskip 3pt
\begin{algorithmic}[1]
  \STATE \textsc{Unsaturated-Local-Search}$(G,w,{\cal S},s,t)$
  \COMMENT{\hfill // Preliminary ``unsaturated'' stage.}
  \vskip 1pt
  \COMMENT{Apart from the main query algorithm below, a local search
  for possible non-saturated optimal $s$--$t$ walks has to be run first
  (in case of close positions $s,t$). Details are described separately.}
  \vskip4pt
  \STATE $I_s=\bigcup_{j\in J_{\cal E}(s)}C_j$; $I'_s \is \emptyset$
  \COMMENT{\hfill // Opening cellular stage (from the start).}
  \WHILE{$I'_s \neq I_s$}
    \STATE $I'_s \is I_s$
    \STATE $d_s,\pi_s \is {\cal{S}}\textsc{-Dijkstra}(I_s,w\rst{I_s},
    {\cal{S}}\rst{I_s},s,{r^{\cal{S}}}\rst{I_s})$
    \STATE $U \is \{ u \in \Gamma(I_s) \, | \, \forall 
    \ell\in Im({\cal{S}})\setminus\{\infty\}.~ \sigma_{\ell}[u]>
    \nu^{\cal{S}}_{\ell}\}$
    \STATE $I_s\is I_s\cup\bigcup_{j\in J_{\cal E}(U)}C_j$
  \ENDWHILE
  \STATE $R_s \is \{ v \in V(B_{\cal E}) \, | \, d_s[v] < \infty\}$
  \smallskip
  \STATE $I_t=\bigcup_{j\in J_{\cal E}(t)}C_j$; $I'_t \is \emptyset$
  \COMMENT{\hfill // Opening cellular stage (to the target).}
  \WHILE{$I'_t \neq I_t$}
    \STATE $I'_t \is I_t$
    \STATE $d_t,\pi_t \is \textsc{\cal{S}-Dijkstra}(I_t^R,w\rst{I_t},
    {\cal{S}}\rst{I_t},t, {r^{\cal{S}}[G^R]}\rst{I_t})$
    \STATE $U \is \{ u \in \Gamma(I_t) \, | \, \forall 
    \ell\in Im({\cal{S}})\setminus\{\infty\}.~ \sigma_{\ell}[u]>
    \nu^{\cal{S}}_{\ell}\}$
    \STATE $I_t\is I_t\cup\bigcup_{j\in J_{\cal E}(U)}C_j$
  \ENDWHILE
  \STATE $R_t \is \{ v \in V(B_{\cal E}) \, | \, d_t[v] < \infty\}$
  \smallskip
  \STATE $B_{s,t} \is B_{\cal E}; ~ V(B_{s,t})\is V(B_{s,t}) \cup \{s,t\}$
  \vskip 4pt
  \FORALL[\hfill // Boundary stage.]{$u \in R_s$}
    \STATE $E(B_{s,t}) \is E(B_{s,t}) \cup (s,u); ~ p(s,u) \is d_s[u]$
  \ENDFOR
  \FORALL{$u \in R_t$}
    \STATE $E(B_{s,t}) \is E(B_{s,t}) \cup (u,t); ~ p(u,t) \is d_t[u]$
  \ENDFOR \vskip 3pt
  \STATE $d_{s,t},(u_0,v_0,\ldots,u_k,v_k)\is 
  \textsc{Reach-Based-Bidirectional-A*}(B_{s,t},p,s,t)$
  \vskip 2pt
  \COMMENT{Recall that the reach values in this algorithm
  refer back to $B_{\cal{E}}$ (and not to $B_{s,t}$). Therefore, for a vertex $v$ 
  accessed from $u$, one has to remember which edge $e_s$ the optimal walk from 
  $s$ to $u$ seen so far starts within $B_{s,t}$. The weight $w(e_s)$ must be 
  subtracted from $d[u]$ while checking the reach admissibility condition.}
  \vskip 4pt
  \FORALL[\hfill // Closing cellular stage.]{$(u_i,v_i) \in E((u_0,v_0,\ldots,
    u_k,v_k))$}
    \STATE $C_i \is$ a cell containing edge $(u_i,v_i)$
    \STATE $d_i,\pi_i \is \textsc{Bidirectional-A*}(C_i^{\infty}, 
    w\rst{C_i^{\infty}} ,u_i,v_i)$
  \ENDFOR
  \STATE $P_{s,t} \is \pi_s[u_0] . \pi_0 \ldots \pi_k . \pi_t^R[v_k]$
  \smallskip
  \RETURN $d_{s,t},~ P_{s,t}$
\end{algorithmic}
\end{algorithm}

\subsubsection{Unsaturated Local Search.} 
Relating to line 1 of Algorithm~\ref{alg:SQuery},
we now explain the necessity of running a separate local search for possible
non-saturated optimal $s$--$t$ walks in the road network.
As noted in the main paper, it is an inevitable limitation of the $\cal
S$-reach $\cal S$-Dijkstra's algorithm that it may not search/discover an
$s$--$t$ walk that is optimal but not saturated (cf.\ Theorem~\ref{thm:Sreach}).
Note that the presence of such a {\em singular walk} cannot be simply tested with
a condition $I_s\cap I_t\neq \emptyset$ in the cellular stage of the
algorithm, and so a separate local search must be used.

Though, thanks to Def.~\ref{def:xadmissible} and with well distributed scope
mapping, such a singular walk may occur only if the distance between $s,t$ is quite short,
and then even not-so-elaborate route planning algorithms may be used very
efficiently.
To be precise, if a bidirectional search is used,
then each direction is searched only until $s$- or $t$-saturated vertices
are reached, which is a quite small subset of the whole network.
(If these two search spaces do not meet, then no unsaturated $s$--$t$ walk
exists.)
Furthermore, yet another natural adaptation of the classical reach
concept---an ``anti-$\cal S$-reach'', can be used to accelerate the search.

Very briefly, the {\em anti-$\cal S$-reach} of a vertex $v$ is defined as
the maximum value of $\min \{ |P^{xv}|_w,|P^{vy}|_w \}$ over all pairs of
vertices $x,y$ such that there exists a non-saturated optimal
$xy$-admissible walk $P$ from $x$ to $y$ passing through $v$,
and $P^{xv},P^{vy}$ are its subwalks between $x,v$ and between $v,y$.
Again, this parameter is easy to compute by brute force in the network.

\subsection{Experimental evaluation.} 

Practical performance of the query algorithm has been evaluated by multiple 
simulation runs on Intel Core 2 Duo 
mobile processor T6570 (2.1 GHz) with 4GB RAM. Our algorithms are implemented 
in C and compiled with gcc-4.4.1 (with -O2). We keep track of parameters 
such as the number of scanned vertices and the queue size that influence the 
amount of working memory needed, and are good indicators of suitability of 
our algorithm for the mobile platforms. 

The collected statistical data in 
Table~\ref{tab:queries}---namely the total numbers of scanned vertices and 
the maximal queue size of the search---though, reasonably well estimate also 
the expected runtime and mainly low memory demands of the same algorithm on 
a mobile navigation device. A~sample query comparison to bidirectional 
Dijkstra's algorithm is then depicted in Fig. \ref{fig:querycomparison}, and 
some statistical data are briefly summarized in Table \ref{tab:querycomparison}.

\begin{figure}[H]
  \centering
  \centerline{\epsfig{file=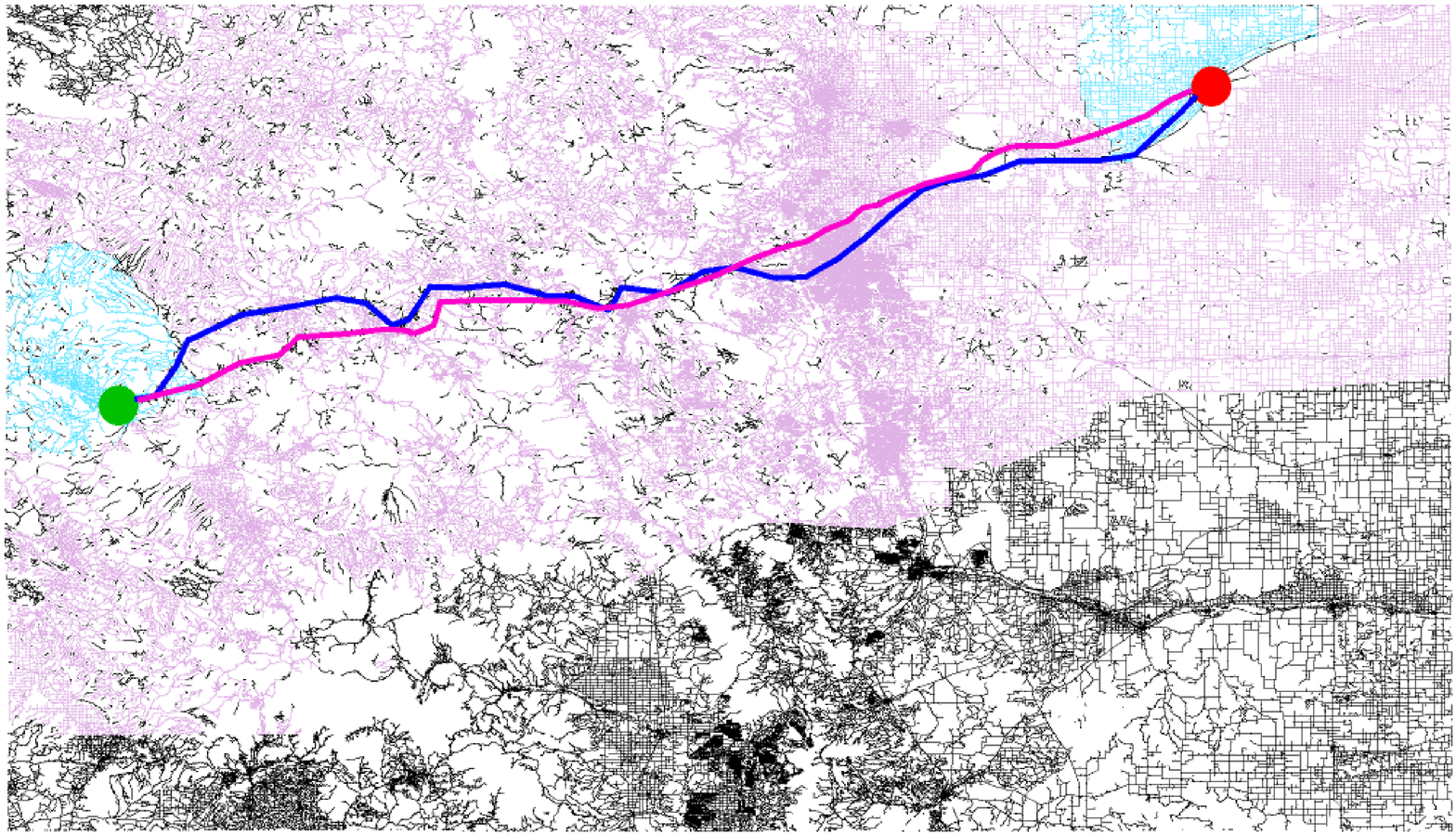, scale=.35}}
  \vspace{-3ex}
  \caption{An example of solving the routing query. The optimal $st$-admissible 
    walk from the start $s$ (green) to the target $t$ (red) is depicted in 
    blue, while the overall optimal one in red. The vertices scanned during 
    our query are in light blue, and those scanned during bidirectional 
    Dijkstra's search are in light magenta. A~brief comparison with a map 
    shows that the blue route prefers major highways, making the route more 
    comfortable (and perhaps faster in practice). Colorado,  CA, United States.}
  \label{fig:querycomparison}
\end{figure}

\begin{table}[H]
  \caption{A comparison of our approach with bidirectional Dijkstra's algorithm
    (\mbox{B-Dijkstra}). The US-all (continental part of the US) road network 
    is used. Each row lists average for 1 000 uniformly chosen pairs of start 
    and target vertices for distance intervals. It could be easily seen that 
    the number of scanned vertices increases only a slightly (more vertices of 
    the boundary graph are scanned) in our query algorithm, while the number 
    of vertices scanned during bidirectional Dijkstra's search increases 
    significantly.}
  \label{tab:querycomparison}
  \centering
  \begin{tabular}{@{\qquad}c@{\qquad}c@{\qquad}c@{\qquad}}
   \hline \hline
    \textit{Distance} & \textit{Scanned vertices} & \textit{Max. queue size} \\
    \scriptsize{(km)} & \scriptsize{our approach / B-Dijksta} & 
    \scriptsize{our approach / B-Dijksta} \\
    \hline
    2000 - 3000 & 3 225 / 28 139 196 & 60 / 4276 \\
    1000 - 2000 & 2 854 / 14 821 472 & 57 / 2477 \\
    \,~500 - 1000 & 2 305 / \,~2 147 015 & 53 / 2011 \\
    \,~~~\,0 - ~500 & 1 714 / ~~~~692 769 & 41 / 1587 \\    
    \hline
  \end{tabular}
\end{table}

\begin{table}[H]
  \caption{Experimental queries for the preprocessed continental US road 
    network (USA-all) from Table~\ref{tab:preprocessing}. Each row carries 
    statistical results for $1000$ uniformly chosen start--target pairs with 
    saturated optimal walks. The table contains average values for the numbers 
    of cells hit by the resulting optimal walk, the overall numbers of 
    vertices scanned in~$G$ during the cellular and in $B_{\cal E}$  during the 
    boundary stage, the maximal numbers of elements in the processing queue
    during the cellular and the boundary stages, and the average time spent in 
    the cellular and the boundary stages. We remark that this statistics skips 
    the closing cellular stage since the computed walk can be ``unrolled'' 
    inside each cell on-the-fly while displaying details of the route.}
  \label{tab:queries}
  \centering
  \begin{tabular*}{\textwidth}{r@{\quad}r@{\quad}c@{\quad}l@{\quad}r@{\quad}c}
    \hline \hline     
    \em~~Query distance& ~~~Hit~\,  & \em~Scanned vertices & 
	\em~~Max.\ queue size  &\em Query time \\
     (km)\qquad\, & cells~ & ~~{\scriptsize cellular / boundary} & 
	~~{\scriptsize cellular / boundary} & ~~(ms)~~~~~ \\
    \hline
3000 - ~~~$\infty$ ~ & 277~ & 1 392 / 3 490 & ~~~~~~~60 / 58 & 8.2 + 29.8 \\
2000 - 3000 ~ & 139~ & 1 411 / 1 814 & ~~~~~~~64 / 52 & 7.9 + 26.9 \\
1000 - 2000 ~ &  57~ & 1 343 / 1 511 & ~~~~~~~57 / 49 & 7.7 + 22.8 \\
500 - 1000 ~ &  25~ & 1 113 / 1 192 & ~~~~~~~53 / 38 & 8.1 + 19.0 \\
0 - \,~500 ~ &  10~ & ~~\,998 / ~~\,716 & ~~~~~~~41 / 34 & 6.9 + 16.1 \\ 
    \hline
  \end{tabular*}
\end{table}

\section{Conclusions}
\label{sec:conclusion}

We have introduced a new concept of {\em scope} in the static route planning
problem, aiming at a proper formalization of vague ``route comfort'' based on
anticipated additional metadata of the road network. At the same time we have 
shown how the scope concept {\em nicely interoperates} with other established 
tools in route planning; such as with vertex-separator partitioning and with 
the reach concept.
Moreover, our approach allows also a smooth incorporation of local route 
restrictions and traffic regulations modeled by so-called {\em maneuvers}
\cite{HM2011-maneuvers}.

On the top of formalizing desired ``comfortable routes'', the proper mixture 
of the aforementioned classical concepts with scope brings more added values;
very small size of auxiliary metadata from preprocessing 
(Table~\ref{tab:preprocessing}) and practically very efficient optimal routing
query algorithm (Table~\ref{tab:queries}). The {\em small price to be paid} 
for this route comfort, fast planning, and small size of auxiliary data 
altogether, is a marginal increase in the weight of an optimal scope 
admissible walk as compared to the overall optimal one (scope admissible walks 
form a proper subset of all walks). Simulations with the very basic scope 
mapping from Table~\ref{tab:scope_example} reveal an average increase of less 
than $3\%$ for short queries up to $500$km, and $1.5\%$ for queries above 
$3000$km. With better quality road network metadata and a more realistic 
utility weight function (such as travel time) these would presumably be even 
smaller numbers. 

\smallskip
At last we very briefly outline two directions for further research on the 
topic.\vspace{-.8ex}%
\begin{enumerate}[i.]
\item With finer-resolution road metadata, it could be useful to add a few 
  more scope levels and introduce another query stage(s) ``in the middle''.
\item The next natural step of our research is to incorporate dynamic road 
  network changes (such as live traffic info) into our approach---more 
  specifically into the definition of scope-admissible walks; e.g., by locally 
  re-allowing roads of low scope level nearby such disturbances.
\end{enumerate}

\bibliographystyle{plain}
\bibliography{references}

\end{document}